\documentclass[preprint,12pt,authoryear]{elsarticle}

\usepackage[margin=1in]{geometry}
\usepackage{amsmath,amssymb,amsthm,mathtools}
\usepackage{algorithm}
\usepackage{algpseudocode}
\usepackage{booktabs}
\usepackage{graphicx}
\usepackage{enumitem}
\usepackage{placeins}
\usepackage{natbib}
\usepackage[colorlinks=true,linkcolor=blue,citecolor=blue,urlcolor=blue]{hyperref}

\newcommand{\figplaceholder}[1]{%
  \fbox{\begin{minipage}[c][0.28\textheight][c]{0.92\textwidth}
  \centering Figure file not attached: \texttt{\detokenize{#1}}
  \end{minipage}}%
}
\newcommand{\maybeincludegraphics}[2][]{%
  \IfFileExists{#2}{\includegraphics[#1]{#2}}{\figplaceholder{#2}}%
}

\newtheorem{theorem}{Theorem}
\newtheorem{proposition}{Proposition}

\theoremstyle{definition}

\newcommand{\argmin}{\mathop{\mathrm arg\,min}}

\newcommand{\bbE}{\mathbb{E}}
\newcommand{\1}{\mathbf{1}}
\newcommand{\Unif}{\mathrm{Unif}}

\journal{Journal of Computational and Graphical Statistics}

\begin{document}

\begin{frontmatter}

\title{Calibrated Predictive Distributions from Sample-Based Generators}

\author[nccu]{Wen-Ting Wang (\texttt{egpivo@gmail.com})}
\author[nccu]{ShengLi Tzeng (\texttt{sltzeng@nchu.edu.tw})}
\author[ti]{Yu-Ting Fan (\texttt{tinafan8181@gmail.com})}
\author[as,nthu]{Hsin-Cheng Huang\corref{cor1} (\texttt{hchuang@stat.sinica.edu.tw})}

\address[nccu]{Department of Applied Mathematics and Graduate Institute of Statistics, National Chung Hsing University, Taiwan}
\address[ti]{Texas Instruments, Taiwan}
\address[as]{Institute of Statistical Science, Academia Sinica, Taiwan}
\address[nthu]{Institute of Statistics and Data Science, National Tsing Hua University, Taiwan}

\cortext[cor1]{Corresponding author}

\date{}

\begin{abstract}
\baselineskip=18pt
Conditional generative models, including diffusion models and ensemble forecasters, often produce predictive samples without a tractable likelihood representation. Such sample-based predictive distributions can be systematically biased and poorly calibrated. We propose bias-corrected conformal probability integral transform (PIT) calibration, a split-sample post-processing framework that outputs a calibrated predictive distribution rather than a single fixed-level prediction interval. The method first estimates an affine location-scale correction on a held-out bias split, then calibrates randomized PIT values using a conformal calibrator. The resulting predictive law is represented as a weighted empirical distribution on the generator order statistics, enabling the direct computation of threshold-coherent exceedance probabilities, arbitrary quantiles, highest-density intervals, expected tail losses, and calibrated resamples. In contrast, standard conformal prediction primarily provides fixed-level prediction sets or threshold decisions and does not directly estimate predictive probabilities or high-density regions. We establish finite-sample calibration in probability under exchangeability and show how an optional split-conformal wrapper based on a PIT-centrality score gives nested prediction intervals with finite-sample marginal coverage at user-specified levels. Simulation studies with controlled misspecification and a WeatherBench-2 precipitation-forecasting application demonstrate substantial improvements in probabilistic calibration and downstream distributional summaries relative to uncalibrated sample-based forecasts and interval-only conformal baselines.
\bigskip
\end{abstract}

\begin{keyword}
\baselineskip=18pt
Bias correction; conformal prediction; continuous ranked probability score; Cram\'er-von Mises distance; probability integral transform; sample-based forecasting.
\end{keyword}

\end{frontmatter}

\baselineskip=18pt
\section{Introduction}
\label{sec:intro}

Modern conditional generative models are increasingly used for predictive inference. Given covariates $x$, a generator produces samples $Y^{(1)},\dots,Y^{(m)} \sim \widehat{P}(\cdot | x)$ that are intended to approximate the conditional distribution of $Y$ given $x$. In many applications, the model is accessed only through such samples rather than through an explicit likelihood or a closed-form density. This setting arises, for example, in diffusion-based inverse problems, in simulation-based science where forward simulators produce realizations, and in probabilistic forecasting pipelines that rely on ensemble-style outputs. In principle, such models deliver full predictive distributions rather than only point predictions. In practice, however, limited training data, model misspecification, and optimization artifacts can introduce substantial systematic bias and probability miscalibration into $\widehat{P}(\cdot|x)$, which in turn undermines downstream uncertainty quantification.

Conformal prediction provides distribution-free predictive inference with finite-sample marginal validity under exchangeability \citep{vovk2005algorithmic,shafer2008tutorial,lei2018dfpi}. Classical conformal regression typically begins with a point predictor and a scalar nonconformity score, such as an absolute residual, and returns a prediction interval at a specified miscoverage level $\alpha$. In sample-only generative settings, one common workaround is to compress the generated sample cloud to a point summary and then apply a standard conformal method. This approach discards distributional information, including skewness, heteroskedasticity, and multimodality, that may be crucial for downstream probability calculations.

The distinction between a valid prediction set and a calibrated predictive distribution is consequential for downstream analysis. Many scientific and operational users do not only ask whether a future response lies in a fixed $90\%$ interval. They ask for threshold probabilities such as $P(Y>c|x)$, tail summaries such as $\mathbb E\{(Y-c)_+|x\}$, Bayes actions under asymmetric losses, or simulated inputs for a subsequent decision model. These are functionals of the predictive law. A fixed-level conformal interval can support a valid set-valued statement, and a one-sided conformal construction can certify a threshold decision at a chosen error level, but neither of these outputs determines a predictive cumulative distribution function (CDF) or a calibrated estimate of an arbitrary exceedance probability.

Recent work has begun to use conditional random samples from a generator directly for conformal inference. Notably, \citet{pmlr-v206-wang23n} develop probabilistic conformal prediction using conditional random samples and construct prediction sets with marginal coverage guarantees. Such methods are well-suited when the target is a set-valued prediction region at a single level. Our objective differs in that we aim to produce a calibrated predictive distribution rather than merely a prediction set at a fixed level.
A predictive cumulative distribution function (CDF) supports calibrated quantiles, equal-tail summaries, and highest-density intervals at all levels, enables calibrated resampling for downstream Monte Carlo decision-making, and can be evaluated using calibration diagnostics such as the Cram\'er-von Mises distance and proper scoring rules such as the continuous ranked probability score (CRPS) \citep{gneiting2007scoring}.

Conformal predictive systems (CPSs) and conformal calibrators provide a general framework for constructing predictive distributions that are calibrated in probability \citep{pmlr-v60-vovk17a,pmlr-v128-vovk20a}. Our method shares this objective but is designed for a sample-only setting, in which the base predictive distribution is unavailable in closed form and is represented only by draws from an implicit generator. Consequently, for a finite generator sample of size $m$, the empirical predictive CDF is both discrete and random. We address these features by developing a randomized probability integral transform (PIT) tailored to sample-based predictors, together with an efficient representation that assigns nonuniform weights to the generator’s order statistics. This representation enables rapid evaluation of calibrated quantiles and direct resampling from the calibrated predictive distribution, without retraining the underlying generator.

Distributional conformal prediction (DCP) calibrates an estimated conditional CDF $\widehat F_x(y)$ using PIT ranks and studies approximate conditional validity under additional regularity and consistency conditions
\citep{chernozhukov2021dcp}. Our setting differs in that $\widehat F_x(y)$ is typically not directly evaluable and is accessible only through generated samples. Consequently, we focus on finite-sample marginal calibration in probability in the sense of CPS. Extensions toward more localized forms of validity, such as localized conformal methods or Mondrian variants, can in principle be incorporated, but they are not our primary focus \citep{guan2023localized,pmlr-v152-bostrom21a}. This is consistent with known impossibility results for exact distribution-free conditional coverage \citep{barber2021limits}.

A practical distinction from both CPS-style calibration and DCP-style PIT adjustment is that we introduce an explicit bias-correction operation, estimated on a held-out split, prior to conformal calibration. In many applications of generative models, systematic location and scale distortions are the dominant sources of error. We address this with an affine correction in two forms: a simple global adjustment based on residual location and scale, and a more flexible $x$-dependent correction in which the residual mean and log-scale ratio are modeled as smooth functions of the covariates using generalized additive model (GAM) regressions on the bias split.

The proposed procedure, bias-corrected conformal PIT calibration (CPIT),%
\footnote{An open-source implementation is available on GitHub (\url{https://github.com/egpivo/bc-cpit}) and PyPI (\url{https://pypi.org/project/bc-cpit}).}
has two distributional outputs. First, it produces a calibrated predictive CDF represented as a weighted empirical distribution on adjusted generator order statistics. This is the main object of the paper. It can be queried for event probabilities, arbitrary quantiles, HDIs, tail expectations, proper scores, and calibrated resamples. Second, it provides a smoothed weighted CDF when full-support summaries are desired. When exact coverage is needed at specified levels, the CPIT CDF can also be used inside a PIT-centrality split-conformal wrapper that returns nested fixed-level intervals. For comparison, we adapt three interval-only split-conformal baselines to the sample-based forecasting setting, using empirical quantiles, scaled residuals, and empirical CRPS scores, allowing us to distinguish fixed-level coverage calibration from calibration of the full predictive distribution.

Our contributions are as follows.
\begin{enumerate}[leftmargin=1.6em]
\item We develop CPIT, a split-sample post-processing framework for sample-based conditional generators. The method combines affine location-scale bias correction with conformal calibration of randomized PIT
values, and applies to black-box predictive samples without requiring a tractable likelihood. The proposed method outputs a calibrated predictive distribution, not only a fixed-level prediction interval.

\item We establish finite-sample calibration in probability for the randomized calibrated PIT values under exchangeability. We also provide an optional PIT-centrality split-conformal wrapper that gives nested fixed-level prediction intervals with finite-sample marginal coverage, and give stability and approximation results for the smoothed weighted CPIT distribution.

\item We demonstrate, through controlled simulations and a WeatherBench~2 precipitation-forecasting application, that CPIT improves distributional calibration and supports downstream probabilistic analyses, including
threshold-coherent exceedance probabilities, expected exceedance severity, tail-risk summaries, and highest-density intervals.
\end{enumerate}

The rest of the paper is organized as follows. Section~\ref{sec:method} introduces the sample-based forecasting setup and the CPIT procedure, including bias correction, PIT calibration, the weighted empirical predictive distribution, and its smoothed version. Section~\ref{sec:score_baselines} presents fixed-level conformal prediction methods for sample-based forecasts, including an exact PIT-centrality wrapper for CPIT and the interval-only conformal baselines used in the empirical comparisons. Section~\ref{sec:theory} gives the calibration, fixed-level coverage, and stability theory. Section~\ref{sec:sim} presents simulation studies that separate global bias, covariate-dependent bias, and distributional shape misspecification. Section~\ref{sec:weatherbench} applies
CPIT to WeatherBench~2 precipitation forecasts and evaluates distributional calibration, interval summaries, exceedance probabilities, and threshold-coherence. Section~\ref{sec:discussion} concludes with limitations and directions for future work.

\section{Proposed Method}
\label{sec:method}

\subsection{Statistical setup}

Let $(X_i,Y_i)$, $i=1,\ldots,n$, denote observations, where
$X_i\in\mathcal X$ contains the covariates used for prediction and
$Y_i\in\mathbb R$ is the scalar response. We assume access to a fitted
sample-only predictive model. For any covariate value $x$, this model returns
an $m$-vector of predictive draws,
\[
\widehat{\boldsymbol Y}(x)
=
\bigl(\widehat Y^{(1)}(x),\ldots,\widehat Y^{(m)}(x)\bigr)
\sim \widehat Q_x^{(m)} .
\]
Here $\widehat Q_x^{(m)}$ denotes the joint law of the $m$ generated values
induced by the fitted model at $x$. The generator may be biased or
miscalibrated, and $\widehat Q_x^{(m)}$ need not match the true conditional
law of $Y\mid X=x$. In particular, we do not require the generated values to
be conditionally independent.

For the observed case \(i\), we write
\[
\widehat Y_i^{(b)}=\widehat Y^{(b)}(X_i),\qquad b=1,\ldots,m.
\]
Thus the basic forecast-response object is
\begin{equation}
\mathcal O_i=\big(X_i,Y_i,\widehat Y_i^{(1)},\ldots,\widehat Y_i^{(m)}\big).
\label{eq:O}
\end{equation}
If the fitted generator is randomized, its Monte Carlo randomness is included in \(\mathcal O_i\). The conformal guarantees below are conditional on the fitted generator and require exchangeability of the corresponding forecast-response objects across the calibration and test cases.

We use four disjoint splits, $\mathcal I_{\mathrm tr}$, $\mathcal I_{\mathrm bias}$, $\mathcal I_{\mathrm cal}$, and $\mathcal I_{\mathrm test}$, for training, bias correction, PIT calibration, and testing. 
When exact fixed-level interval coverage is desired in addition to the CPIT predictive CDF, Section~\ref{sec:score_baselines} describes a final split-conformal wrapper that uses an additional interval-calibration split $\mathcal I_{\mathrm int}$, or a held-out portion of the available calibration data. This optional interval split is not used to estimate the CPIT calibration map. All finite-sample statements below are conditional on the fitted generator and bias-correction rule.

\subsection{Location-scale bias correction}
\label{sec:bias}

Before applying the affine correction, we allow an optional monotone transformation of the response. Let $T(\cdot)$ be a fixed, strictly increasing transformation from the support of $Y$ to an interval on the real line, with inverse $T^{-1}(\cdot)$. Examples include the identity transformation, a log transformation with an offset, or a Box-Cox transformation. The purpose of $T(\cdot)$ is to place the outcome on a scale where location and scale corrections are more appropriate, for example, by reducing skewness.

\subsubsection{Global affine correction}

Let
\[
\widehat\mu(x)=\frac{1}{m}\sum_{j=1}^m T\big(\widehat Y^{(j)}(x)\big),\qquad\widehat\sigma^2(x)=\frac{1}{m-1}\sum_{j=1}^m\big\{T\big(\widehat Y^{(j)}(x)\big)-\widehat\mu(x)\big\}^2
\]
denote the sample mean and variance of the transformed generator samples. The global correction applies a constant location shift and a positive scale factor on the transformed scale,
\begin{equation*}
  T\big(\widehat Y^{(j)}_{\mathrm{adj}}(x)\big)=a_0+b_0\,T\big(\widehat Y^{(j)}(x)\big),\qquad b_0>0.
\end{equation*}

On the bias split $\mathcal{I}_{\mathrm{bias}}$, we estimate $(a_0,b_0)$ by the heteroskedastic Gaussian quasi-likelihood on the
transformed scale,
\begin{equation}
  (\widehat a_0,\widehat b_0)
  \in \argmin_{a_0\in\mathbb{R},\,b_0>0}
  \sum_{i\in\mathcal{I}_{\mathrm{bias}}}
  \left[2\log b_0+
  \frac{\{T(Y_i)-a_0-b_0\,\widehat\mu(X_i)\}^2}
       {b_0^2\,\widehat\sigma^2(X_i)}
  \right].
  \label{eq:global-objective-transformed}
\end{equation}
For fixed $b_0$, the minimizer in $a_0$ is
\begin{equation*}
  \widehat a_0(b_0)=
  \frac{\sum_{i\in\mathcal{I}_{\mathrm{bias}}}\big\{T(Y_i)-b_0\,\widehat\mu(X_i)\big\}\big/\widehat\sigma^2(X_i)}
  {\sum_{i\in\mathcal{I}_{\mathrm{bias}}}1\big/\widehat\sigma^2(X_i)},
\end{equation*}
so the criterion can be minimized by profiling over the single positive parameter $b_0$. The resulting corrected samples are therefore
\[
\widehat Y^{(j)}_{\mathrm{adj}}(x)=T^{-1}\big(\widehat a_0+\widehat b_0\,T\big(\widehat Y^{(j)}(x)\big)\big),\qquad j=1,\dots,m.
\]

\subsubsection{Covariate-dependent correction}

When the generator bias varies with the covariates, we allow the center and spread of the transformed sample distribution to depend on \(x\). Let \(\mu_{\mathrm{adj}}(x)\) denote the adjusted transformed mean and let \(h(x)\) denote the log-variance multiplier. We define
\begin{equation}
T\bigl(\widehat Y_{\mathrm{adj}}^{(j)}(x)\bigr)=\mu_{\mathrm{adj}}(x)+
\exp\{h(x)/2\}
\big\{
T\big(\widehat Y^{(j)}(x)\big)-\widehat\mu(x)
\big\},
\qquad j=1,\ldots,m.
\label{eq:samples_corrected_local_affine_transformed}
\end{equation}
For each fixed \(x\), this remains an affine transformation of the generated samples, with \(\mu_{\mathrm{adj}}(x)\) and \(\exp\{h(x)/2\}\) controlling their transformed center and spread, respectively.

Let $\boldsymbol\phi(x)=(\phi_1(x),\ldots,\phi_p(x))'$ be a low-dimensional feature map containing covariates or summaries of the generated sample. We use separate additive predictors for the adjusted center and the log-variance multiplier:
\begin{equation}
\mu_{\mathrm{adj}}(x)=\alpha_0+\beta_0\,\widehat\mu(x)
+\sum_{\ell=1}^p f_{m,\ell}\{\phi_\ell(x)\},
\qquad
h(x)=\gamma_0
+\sum_{\ell=1}^p f_{h,\ell}\{\phi_\ell(x)\}.
\label{eq:local_affine_map_transformed}
\end{equation}
Here $\beta_0$ calibrates the relation between the observed response and the generator
center, whereas $\exp(\gamma_0/2)$ is the baseline spread multiplier. The smooth terms describe covariate-dependent departures from these baseline
relationships and are centered over the bias-correction split for identifiability.

We estimate the mean and scale functions jointly by minimizing the penalized normal
quasi-likelihood
\begin{align}
(\widehat\mu_{\mathrm{adj}},\widehat h)
=&~ \operatorname*{arg\,min}_{\mu_{\mathrm{adj}},h}
\bigg\{\sum_{i\in\mathcal I_{\mathrm{bias}}}
\left[
h(X_i)+
\frac{\{T(Y_i)-\mu_{\mathrm{adj}}(X_i)\}^2}
     {\widehat\sigma^2(X_i)\exp\{h(X_i)\}}
\right]\notag\\
&~ +
\sum_{\ell=1}^p \lambda_{m,\ell}J_{m,\ell}(f_{m,\ell})
+
\sum_{\ell=1}^p \lambda_{h,\ell}J_{h,\ell}(f_{h,\ell})\bigg\},
\label{eq:local_fit_transformed}
\end{align}
where $J_{m,\ell}$ and $J_{h,\ell}$ are spline
roughness penalties induced by
\eqref{eq:local_affine_map_transformed}.  This criterion corresponds to the working model
\[
T(Y_i)\mid X_i
\approx
N\!\left(\mu_{\mathrm{adj}}(X_i),
\widehat\sigma^2(X_i)\exp\{h(X_i)\}\right).
\]

In practice, \eqref{eq:local_fit_transformed} can be optimized by alternating two standard
GAM updates.

\begin{itemize}
\item Given a current estimate $h^{(t)}(\cdot)$, update
$\mu_{\mathrm{adj}}(\cdot)$ by fitting a weighted Gaussian GAM with weights
\[
w_i^{(t)}
=
\left[\widehat\sigma^2(X_i)\exp\{h^{(t)}(X_i)\}\right]^{-1}.
\]
Equivalently,
\begin{equation*}
\mu_{\mathrm{adj}}^{(t+1)}(\cdot)
=
\operatorname*{arg\,min}_{\mu_{\mathrm{adj}}(\cdot)}
\sum_{i\in\mathcal I_{\mathrm{bias}}}
w_i^{(t)}
\{T(Y_i)-\mu_{\mathrm{adj}}(X_i)\}^2
+
\sum_{\ell=1}^p \lambda_{m,\ell}J_{m,\ell}(f_{m,\ell}).
\label{eq:m_gam_fit_transformed}
\end{equation*}

\item Given the updated mean fit $\mu_{\mathrm{adj}}^{(t+1)}(\cdot)$, form
\[
q_i^{(t+1)}
=
\frac{\{T(Y_i)-\mu_{\mathrm{adj}}^{(t+1)}(X_i)\}^2}
     {\widehat\sigma^2(X_i)},
\qquad i\in\mathcal I_{\mathrm{bias}}.
\]
Update $h(\cdot)$ by solving
\begin{equation*}
h^{(t+1)}(\cdot)
=
\operatorname*{arg\,min}_{h(\cdot)}
\sum_{i\in\mathcal I_{\mathrm{bias}}}
\left[q_i^{(t+1)}\exp\{-h(X_i)\}+h(X_i)\right]
+\sum_{\ell=1}^p \lambda_{h,\ell}J_{h,\ell}(f_{h,\ell}).
\label{eq:h_gam_fit_transformed}
\end{equation*}
This is equivalent to fitting a Gamma GAM with log link and response $q_i^{(t+1)}$, for
which
\[
\mathbb E\big(q_i^{(t+1)}\big|X_i\big)=\exp\{h(X_i)\}.
\]
\end{itemize}

We initialize the iteration at the global affine fit,
\[
\mu_{\mathrm{adj}}^{(0)}(x)
=\widehat a_0+\widehat b_0\,\widehat\mu(x),
\qquad
h^{(0)}(x)=2\log\widehat b_0,
\]
and alternate the mean and scale updates until the criterion in
\eqref{eq:local_fit_transformed} stabilizes.  With the resulting estimates
$\widehat\mu_{\mathrm{adj}}$ and $\widehat h$, the covariate-dependent corrected samples are
\begin{equation}
\widehat Y_{\mathrm{adj}}^{(j)}(x)
=
T^{-1}\!\left(
\widehat\mu_{\mathrm{adj}}(x)
+\exp\{\widehat h(x)/2\}
\left\{
T\!\big(\widehat Y^{(j)}(x)\big)-\widehat\mu(x)
\right\}
\right),
\qquad j=1,\ldots,m.
\label{eq:local_b_transformed}
\end{equation}

With either the global or covariate-dependent transformed-scale correction, the bias-corrected empirical base CDF is
\begin{equation}
\widehat F_x^{\mathrm{adj}}(y)=\frac{1}{m}\sum_{j=1}^m\1\bigl\{\widehat Y^{(j)}_{\mathrm{adj}}(x)\le y\bigr\}.
\label{eq:cdf_corr}
\end{equation}
This distribution serves as the base predictive CDF to be calibrated in the next step.

\subsection{Conformal calibrator and calibrated predictive CDF}
\label{sec:calibrator}

For each calibration pair $(X_i,Y_i)$ with $i\in\mathcal I_{\mathrm{cal}}$, we draw $m$ samples from the generator and apply the bias correction from Section~\ref{sec:bias}. Let
$\hat Y^{(1)}_{i,\mathrm{adj}},\dots,\hat Y^{(m)}_{i,\mathrm{adj}}$
denote the resulting adjusted samples at $X_i$. Define
\begin{equation*}
N_i=\#\{j:\hat Y^{(j)}_{i,\mathrm{adj}}<Y_i\},
\qquad
N_i^*=\#\{j:\hat Y^{(j)}_{i,\mathrm{adj}}=Y_i\}.
\end{equation*}
The rank-cell randomized PIT is
\begin{equation}
u_i=\frac{N_i+V_i(N_i^*+1)}{m+1},
\qquad V_i\sim\Unif(0,1),
\label{eq:pit_random}
\end{equation}
where the $V_i$ are independent across $i$ and independent of the data. When there are no ties, $N_i^*=0$ and
$u_i=(N_i+V_i)/(m+1)$.
Thus $u_i$ is uniformly randomized within the rank cell indexed by $N_i$. Under an ideal continuous generator for which
$\big\{Y_i,\hat Y^{(1)}_{i,\mathrm{adj}},\ldots,\hat Y^{(m)}_{i,\mathrm{adj}}\big\}$
are exchangeable, $N_i$ is uniform on $\{0,\ldots,m\}$, and hence $u_i\sim\Unif(0,1)$ exactly.

Using the calibration PIT values $\{u_i:i\in\mathcal I_{\mathrm{cal}}\}$, define the conformal calibration map
\begin{equation}
\widehat C(t)=
\begin{cases}0, & t=0,\\[4pt]
\displaystyle
\frac{1+\sum_{i\in\mathcal I_{\mathrm{cal}}}\1\{u_i\le t\}}{|\mathcal I_{\mathrm{cal}}|+1}, & 0<t\le 1.
\end{cases}
\label{eq:calibrator}
\end{equation}
The randomized PIT in \eqref{eq:pit_random} is defined on an
$(m+1)$-cell rank grid, whereas the adjusted empirical distribution is
supported on $m$ generated values. To align these two representations,
define the finite-support calibration map
\begin{equation}
\widehat C_m(t)
=
\frac{\widehat C(mt/(m+1))}
     {\widehat C(m/(m+1))},
\qquad 0\le t\le1.
\label{eq:finite_support_calibrator}
\end{equation}
Because $\widehat C_m$ is nondecreasing and satisfies
$\widehat C_m(0)=0$ and $\widehat C_m(1)=1$, it is a proper
calibration map on $[0,1]$.

For a new covariate value $x$, the CPIT predictive CDF is
\begin{equation}
\widetilde F_x(y)
=
\widehat C_m\big(\widehat F_x^{\mathrm{adj}}(y)\big).
\label{eq:bc_cpit_cdf}
\end{equation}
The rescaling by $m/(m+1)$ maps the empirical-CDF grid
$j/m$ to the rank-cell grid $j/(m+1)$, while the denominator
normalizes the resulting finite-support distribution.
When $\widehat C$ is the identity map, $\widehat C_m$ is also the
identity, and \eqref{eq:bc_cpit_cdf} reduces exactly to the adjusted
empirical CDF $\widehat F_x^{\mathrm{adj}}$ in
\eqref{eq:cdf_corr}. The finite-sample result in
Theorem~\ref{thm:bccpit} concerns the randomized conformal transform
$\widehat C(u)$. Equation~\eqref{eq:bc_cpit_cdf} is its deterministic
finite-support projection and is the predictive CDF used for
downstream distributional summaries.

Algorithm~\ref{alg:bccpit} summarizes the complete procedure.

\begin{algorithm}[t]
\caption{CPIT (Bias-Corrected Conformal PIT Calibration)}
\label{alg:bccpit}
\begin{algorithmic}[1]
\Require Training split, bias split $\mathcal I_{\mathrm{bias}}$, calibration split $\mathcal I_{\mathrm{cal}}$, generator sample size $m$
\State Fit generator $G$ on the training split
\State Estimate either $(\widehat a_0,\widehat b_0)$ by \eqref{eq:global-objective-transformed} or $\widehat \mu_{\mathrm{adj}}(\cdot),\widehat h(\cdot)$ by \eqref{eq:local_b_transformed} on $\mathcal I_{\mathrm{bias}}$
\For{$i\in\mathcal I_{\mathrm{cal}}$}
  \State Draw $\hat y^{(1:m)}(X_i)\sim \widehat{Q}^{(m)}_{X_i}$ and apply the bias correction
  \State Compute $u_i$ using \eqref{eq:pit_random}
\EndFor
\State Construct $\widehat C$ by \eqref{eq:calibrator}
\State For a new $x$, draw $\hat Y^{(1:m)}(x)$, apply the bias correction, and output the calibrated predictive CDF $\widetilde F_x$ with rank-cell weights $w_1,\ldots,w_m$ by \eqref{eq:bc_cpit_cdf}-\eqref{eq:weights}
\end{algorithmic}
\end{algorithm}

\subsection{Quantiles, highest-density intervals, and calibrated resampling}
\label{sec:quantiles_resampling}

A central advantage of CPIT is that it outputs a calibrated predictive distribution rather than only a prediction interval at a single nominal level. This section describes how to compute quantiles, equal-tail summaries, HDIs, calibrated resamples, and a smoothed calibrated CDF from \(\widetilde F_x\).

Let $\widehat{\boldsymbol Y}_{\!\!\mathrm{adj}}(x)=\{\widehat Y_{\mathrm{adj}}^{(1)}(x),\ldots,\widehat Y_{\mathrm{adj}}^{(m)}(x)\}$ be the bias-corrected generator samples at covariate value \(x\). We write
\begin{equation*}
\widehat Y_{\mathrm{adj},(1)}(x)\le\cdots\le\widehat Y_{\mathrm{adj},(m)}(x)
\label{eq:cdf_smooth}
\end{equation*}
for their order statistics, with ties counted according to their multiplicity. The normalized rank-cell weights are
\begin{equation}
w_j
=
\widehat C_m\left(\frac{j}{m}\right)
-
\widehat C_m\left(\frac{j-1}{m}\right),
\qquad j=1,\ldots,m.
\label{eq:weights}
\end{equation}
Because \(\widehat C\) is nondecreasing, the weights are nonnegative and sum to one. Thus the calibrated predictive distribution can be written as the weighted empirical measure
\begin{equation}
\widetilde P_x=\sum_{j=1}^m w_j\,\delta_{\widehat Y_{\mathrm{adj},(j)}(x)},
\label{eq:weighted measure}
\end{equation}
with CDF 
\begin{equation}
\widetilde F_x(y)=\sum_{j=1}^m w_j \,\1\big\{\widehat Y_{\mathrm{adj},(j)}(x)\le y\big\},
\label{eq:weighted-empirical}
\end{equation}
which is the same as that in \eqref{eq:bc_cpit_cdf}. The same telescoping argument remains valid when adjusted draws are tied, because all copies of a tied value are consecutive in the ordered sample. If the conformal calibration map is the identity, then  \(w_j=1/m\) for all \(j\). CPIT therefore reduces to the ordinary empirical distribution of the adjusted generator draws.

The corresponding quantile is
\[
\widetilde Q_x(\alpha)=\inf\{y:\widetilde F_x(y)\ge \alpha\},\qquad \alpha\in(0,1).
\]
An equal-tail \((1-\alpha)\) empirical CPIT interval is therefore
\begin{equation}
\widetilde I_\alpha(x)=\big[\widetilde Q_x(\alpha/2),\,\widetilde Q_x(1-\alpha/2)\big].
\label{eq:CI}
\end{equation}
The intervals in \eqref{eq:CI} are summaries of the calibrated predictive CDF rather than separate split-conformal prediction sets. Section~\ref{sec:score_baselines} shows how to add a final PIT-centrality conformal layer when exact fixed-level coverage is desired.

The equal-tail CPIT intervals in \eqref{eq:CI} are only one type of interval
that can be extracted from the calibrated predictive distribution. Because
CPIT returns a full weighted predictive law, we can also construct a
highest-density-region (HDR) interval, defined here as the shortest interval
whose calibrated predictive mass is at least \(1-\alpha\). With $w_0=0$, define
\begin{equation}
(k_\alpha,\ell_\alpha)\in\operatorname*{arg\,min}_{1\le k\le \ell\le m}
\Big\{
\widehat Y_{\mathrm{adj},(\ell)}(x)-\widehat Y_{\mathrm{adj},(k)}(x):w_k+\cdots+w_\ell\ge 1-\alpha
\big\},
\label{eq:hdi_indices}
\end{equation}
and the empirical CPIT HDR interval is
\begin{equation}
\widetilde H_\alpha(x)
=
\bigl[
\widehat Y_{\mathrm{adj},(k_\alpha)}(x),
\widehat Y_{\mathrm{adj},(\ell_\alpha)}(x)
\bigr].
\label{eq:hdi}
\end{equation}
After sorting, \eqref{eq:hdi_indices} can be computed by a one-pass
two-pointer search over the cumulative weights. This construction highlights a
key distinction between CPIT and interval-only conformal methods. Once a
calibrated predictive law is available, intervals can be chosen by optimizing
over calibrated probability mass, rather than being fixed in advance by a
single nonconformity score.

The same weights yield calibrated resampling.  To draw from $\widetilde P_x$, sample $J\in\{1,\ldots,m\}$ with probabilities $w_1,\ldots,w_m$ and return $\widehat Y_{\mathrm{adj},(j)}(x)$.  More generally,
\[
  \bbE_{\widetilde P_x}\{f(Y)\}=\sum_{j=1}^m w_j\,f\big(\widehat Y_{\mathrm{adj},(j)}(x)\big)
\]
for any measurable function $f$. This weighted representation is useful in downstream Monte Carlo tasks such as estimating exceedance probabilities, computing risk measures, or propagating predictive uncertainty through a decision rule.

\subsection{Smoothed weighted CPIT distribution}
\label{sec:smoothed-cpit}

The weighted empirical distribution in
\eqref{eq:weighted-empirical} is supported only on the adjusted
generator draws. To obtain a continuous predictive law with full
support, we smooth this distribution by Gaussian convolution. For a
bandwidth $\tau_x>0$, define
\begin{equation}
\widetilde F_x^\tau(y)
=
\sum_{j=1}^m
w_j
\Phi\bigg(
\frac{y-\widehat Y_{\mathrm{adj},(j)}(x)}
     {\tau_x}
\bigg),
\label{eq:smoothed-cdf}
\end{equation}
where $\Phi$ denotes the standard normal CDF. Let
$\widetilde P_x^\tau$ denote the probability distribution associated
with \eqref{eq:smoothed-cdf}. Because the weights are nonnegative and
sum to one, $\widetilde F_x^\tau$ is a proper continuous CDF. Moreover,
as $\tau_x\rightarrow0$,
$\widetilde F_x^\tau(y)$ converges to the weighted empirical CDF at
each of its continuity points.

For the Gaussian kernel in \eqref{eq:smoothed-cdf}, the
normal-reference rule for kernel distribution-function estimation
gives
\[
\tau_x
=
4^{1/3}\widehat s_x m^{-1/3},
\]
where $\widehat s_x$ is the unweighted sample standard deviation of the
adjusted generator draws
$\widehat Y_{\mathrm{adj}}^{(1)}(x),\ldots,
\widehat Y_{\mathrm{adj}}^{(m)}(x)$
\citep{lopez2015bandwidth}. Because this rule is derived for an
ordinary unweighted kernel CDF estimator, we use it as a simple
scale-adaptive default rather than as an optimal bandwidth for the
calibration-dependent CPIT weights. More data-adaptive alternatives
include plug-in bandwidth selection
\citep{altman1995bandwidth} and CDF-specific cross-validation
\citep{bowman1998bandwidth}.

The corresponding smoothed quantile function is
\[
\widetilde Q_x^\tau(q)
=
\inf\left\{
y:\widetilde F_x^\tau(y)\ge q
\right\},
\qquad 0<q<1.
\]
Hence, the smoothed equal-tail $(1-\alpha)$ CPIT interval is
\begin{equation}
\widetilde I_\alpha^\tau(x)
=
\left[
\widetilde Q_x^\tau(\alpha/2),
\widetilde Q_x^\tau(1-\alpha/2)
\right].
\label{eq:smoothed_CI}
\end{equation}

The density associated with \eqref{eq:smoothed-cdf} is the Gaussian
mixture
\[
\widetilde f_x^\tau(y)
=
\sum_{j=1}^m
w_j\tau_x^{-1}
\phi\bigg(
\frac{y-\widehat Y_{\mathrm{adj},(j)}(x)}
     {\tau_x}
\bigg),
\]
where $\phi$ is the standard normal density. A smoothed
highest-density region is
\[
\widetilde H_\alpha^\tau(x)
=
\left\{
y:
\widetilde f_x^\tau(y)\ge\lambda_\alpha(x)
\right\},
\]
where
\[
\lambda_\alpha(x)
=
\sup\left\{
\lambda\ge0:
\widetilde P_x^\tau
\big(
\widetilde f_x^\tau(Y)\ge\lambda
\big)
\ge1-\alpha
\right\}.
\]
For a multimodal predictive distribution,
$\widetilde H_\alpha^\tau(x)$ may be a union of disjoint intervals.
The smoothed distribution can be sampled by first drawing $J$ with probabilities $w_1,\ldots,w_m$ and then drawing from $N\big(\widehat Y_{\mathrm{adj},(J)}(x),\tau_x^2\big)$.

\section{Fixed-Level Conformal Prediction}
\label{sec:score_baselines}

This section collects fixed-level split-conformal constructions for sample-based predictive distributions. The first construction is an optional wrapper for CPIT. It uses the calibrated CDF only through a scalar PIT-centrality score and therefore gives the usual finite-sample marginal coverage guarantee at any chosen significance level. The remaining constructions are interval-only conformal baselines used in the empirical comparisons. They are useful fixed-level competitors, but their output is a prediction set at a selected nominal level rather than a calibrated predictive CDF.

\subsection{Generic split-conformal construction}
\label{sec:fixed_level_scores}

Let \(\mathcal I_{\mathrm conf}\) be a conformal calibration split that is not
used to fit the prediction rule or score function. For a fixed nonconformity
score \(s(x,y)\), compute
\[
S_i=s(X_i,Y_i), \qquad i\in\mathcal I_{\mathrm conf},
\]
and let \(S_{(1)}\le \cdots \le S_{(N_{\mathrm conf})}\) denote their order
statistics, where \(N_{\mathrm conf}=|\mathcal I_{\mathrm conf}|\). For a target
miscoverage level \(\alpha\), set
\[
k_\alpha=\left\lceil (N_{\mathrm conf}+1)(1-\alpha)\right\rceil,
\qquad
q_{\alpha,s}=\begin{cases}
S_{(k_\alpha)}, & k_\alpha\le N_{\mathrm conf},\\
+\infty, & k_\alpha>N_{\mathrm conf}.
\end{cases}
\]
The split-conformal prediction set is
\[
\mathcal C_{\alpha,s}(x)=\{y:s(x,y)\le q_{\alpha,s}\}.
\]
Under exchangeability of the conformal calibration cases and the test case,
this set has marginal coverage at least \(1-\alpha\), conditional on all data
used to construct the score. This generic construction will be used in two
ways. For the CPIT-centrality wrapper below, \(\mathcal I_{\mathrm conf}=\mathcal
I_{\mathrm int}\), which is separate from the PIT-calibration split. For the
interval-only baselines, \(\mathcal I_{\mathrm conf}=\mathcal I_{\mathrm cal}\) is the usual conformal
calibration split for the corresponding fixed-level score.

\subsection{PIT-centrality conformal intervals from CPIT}
\label{sec:pit_central_interval_method}

The equal-tail CPIT interval in \eqref{eq:CI} is a distributional summary of
\(\widetilde F_x\). If exact finite-sample coverage is required at a chosen
level, we can add a final split-conformal layer after the CPIT CDF has been
constructed. Let \(\mathcal I_{\mathrm int}\) be an interval-calibration split not
used to estimate the affine correction or the CPIT calibration map, and define
the PIT-centrality score
\begin{equation}
S_{\mathrm pit}(x,y)=\big|2\widetilde F_x(y)-1\big|.
\label{eq:pit-centrality-score}
\end{equation}
Let \(q_\alpha^{\mathrm pit}\) be the split-conformal quantile of
\(S_{\mathrm pit}(X_i,Y_i)\), \(i\in\mathcal I_{\mathrm int}\), computed as in
Section~\ref{sec:fixed_level_scores}. The resulting conformalized CPIT set is
\begin{equation}
\mathcal C_\alpha^{\mathrm pit}(x)
=
\left\{y:S_{\mathrm pit}(x,y)\le q_\alpha^{\mathrm pit}\right\}.
\label{eq:pit-centrality-set}
\end{equation}
For a continuous and strictly increasing $\widetilde F_x$, this set is the
interval
\begin{equation}
\mathcal C_\alpha^{\mathrm pit}(x)
=
\left[
\widetilde Q_x\big((1-q_\alpha^{\mathrm pit})/2\big),
\widetilde Q_x\big((1+q_\alpha^{\mathrm pit})/2\big)
\right].
\label{eq:pit-centrality-inverse}
\end{equation}
For a general right-continuous CDF, \eqref{eq:pit-centrality-set} is the exact
conformal set and should be used directly. A closed interval containing this
set can instead be formed from the lower endpoint
$\inf\{y:\widetilde F_x(y)\ge (1-q_\alpha^{\mathrm pit})/2\}$ and the upper endpoint
$\sup\{y:\widetilde F_x(y)\le (1+q_\alpha^{\mathrm pit})/2\}$. The same calibration
scores can be queried at several values of $\alpha$, producing a nested family
because $q_\alpha^{\mathrm pit}$ is monotone in $1-\alpha$.
Theorem~\ref{thm:pit_central_interval} gives the finite-sample coverage guarantee
for the exact score sublevel set at each chosen level. Thus CPIT supplies the
calibrated predictive CDF, while the PIT-centrality wrapper supplies fixed-level
marginal coverage when it is needed.

\subsection{Interval-only conformal baselines}
\label{sec:interval_baselines}

For the interval-only baselines, the split
$\mathcal I_{\mathrm{cal}}=\mathcal I_{\mathrm{conf}}$ is used directly as the split-conformal
calibration set. 
For each $i\in\mathcal I_{\mathrm{cal}}$, obtain the adjusted predictive
samples
\(
\widehat Y_{\mathrm{adj}}^{(1)}(X_i),\ldots,
\widehat Y_{\mathrm{adj}}^{(m)}(X_i).
\)
The same notation covers the Raw, Global, and GAM baselines by taking the
bias correction to be, respectively, the identity, the global correction,
or the covariate-dependent correction.

Recall the adjusted empirical CDF $\widehat F_x^{\mathrm{adj}}$ from \eqref{eq:cdf_corr}, and define its quantile function by
\[
\widehat Q_x^{\mathrm{adj}}(u)
=
\inf\left\{
y:
\widehat F_x^{\mathrm{adj}}(y)\ge u
\right\},
\qquad 0<u<1.
\]

\subsection*{Quantile residual score}

For a target miscoverage level \(\alpha\), the quantile residual score is
\begin{equation}
s_{\mathrm{QR},\alpha}(x,y)=\max\big\{\widehat U_{\mathrm{adj},\alpha/2}(x)-y,\,y-\widehat Q_{\mathrm{adj},(1-\alpha)/2}(x)\big\}.
\label{eq:qr_signed}
\end{equation}
This is the conformalized quantile regression score
\citep{romano2019cqr}, applied here to empirical quantiles of the
bias-adjusted generator samples. Because the score is signed, the conformal
correction may be negative. Thus the calibrated interval can shrink an overly
conservative base interval as well as expand an undercovering one.

Let \(q_\alpha^{\mathrm{QR}}\) be the split-conformal quantile computed from
the calibration scores \(s_{\mathrm{QR},\alpha}(X_i,Y_i)\). The resulting
prediction interval is
\[
\mathcal C_\alpha^{\mathrm{QR}}(x)
=
\big[
\widehat Q_{\mathrm{adj},\alpha/2}(x)-q_\alpha^{\mathrm{QR}},\,
\widehat Q_{\mathrm{adj},(1-\alpha)/2}(x)+q_\alpha^{\mathrm{QR}}
\big],
\]
with the convention that the interval is empty if the lower endpoint exceeds
the upper endpoint.

\subsection*{Scaled residual score}

The scaled residual score uses the empirical center and spread of the
bias-adjusted sample cloud:
\begin{equation}
s_{\mathrm{SR}}(x,y)
=
\frac{|y-\widehat\mu_{\mathrm{adj}}(x)|}
{\widehat\sigma_{\mathrm{adj}}(x)}.
\label{eq:scaled_resid}
\end{equation}
This score adapts the residual magnitude to the local dispersion of the
generator samples. Let \(q_\alpha^{\mathrm{SR}}\) be the split-conformal
quantile of the scaled residual scores. Inverting
\(s_{\mathrm{SR}}(x,y)\le q_\alpha^{\mathrm{SR}}\) gives
\[
\mathcal C_\alpha^{\mathrm{SR}}(x)
=
\big[
\widehat\mu_{\mathrm{adj}}(x)
-
q_\alpha^{\mathrm{SR}}\widehat\sigma_{\mathrm{adj}}(x),\,
\widehat\mu_{\mathrm{adj}}(x)
+
q_\alpha^{\mathrm{SR}}\widehat\sigma_{\mathrm{adj}}(x)
\big].
\]

\subsection*{Empirical CRPS score}

The empirical CRPS score evaluates the full bias-adjusted sample cloud rather
than only its center and spread \citep{gneiting2007scoring}:
\begin{equation}
s_{\mathrm{CRPS}}(x,y)
=
\frac{1}{m}\sum_{b=1}^m
|\widehat Y_{\mathrm{adj}}^{(b)}(x)-y|
-
\frac{1}{2m^2}
\sum_{b=1}^m
\sum_{b'=1}^m
|\widehat Y_{\mathrm{adj}}^{(b)}(x)
-\widehat Y_{\mathrm{adj}}^{(b')}(x)|.
\label{eq:crps_emp}
\end{equation}
For fixed samples
\(\widehat Y_{\mathrm{adj}}^{(1:m)}(x)\), the function \(s_{\mathrm{CRPS}}(x,y)\) is convex and piecewise linear in $y$. Therefore, the split-conformal prediction set
\[
\mathcal C_\alpha^{\mathrm{CRPS}}(x)
=
\{y:\ s_{\mathrm{CRPS}}(x,y)\le q_\alpha^{\mathrm{CRPS}}\}
\]
is an interval,
where \(q_\alpha^{\mathrm{CRPS}}\) is the split-conformal quantile of the empirical CRPS scores.


\section{Theory}
\label{sec:theory}

This section establishes the theoretical guarantees for CPIT and the optional
fixed-level construction in Section~\ref{sec:score_baselines}. We first prove a
finite-sample rank-calibration result for the conformally transformed randomized
PIT. We then establish marginal coverage and nesting for the PIT-centrality
split-conformal wrapper. Finally, we study the weighted CPIT law, showing that
Gaussian smoothing yields a full-support distribution and that estimation error
in the one-dimensional calibration map propagates stably to the predictive CDF.

Let $\mathcal D_{\mathrm tr}$ and $\mathcal D_{\mathrm bias}$ collect the data and
auxiliary randomness used to fit the generator and the bias-correction rule.
These quantities are treated as fixed in the conditional statements below. We
write $N=|\mathcal I_{\mathrm cal}|$.

\subsection{Finite-sample PIT calibration}

For a calibrated continuous predictive distribution, the randomized PIT is
uniform on $[0,1]$. Because CPIT starts from an empirical distribution supported
on generated values, its basic finite-sample statement is instead a conformal
rank result. The next theorem concerns the transformed value
$\widehat C(u_{n+1})$, not the raw randomized PIT $u_{n+1}$.

\begin{theorem}[Finite-sample CPIT rank calibration]
\label{thm:bccpit}
For every $i\in\mathcal I_{\mathrm cal}\cup\{n+1\}$, let $u_i$ be defined by
\eqref{eq:pit_random} using the auxiliary variable $V_i$. Suppose that,
conditional on $\mathcal D_{\mathrm tr}$ and $\mathcal D_{\mathrm bias}$,
\[
\{(\mathcal O_i,V_i):i\in\mathcal I_{\mathrm cal}\cup\{n+1\}\}
\]
is exchangeable, where $\mathcal O_i$ is defined in \eqref{eq:O}. Then the
calibration map $\widehat C$ in \eqref{eq:calibrator} satisfies
\begin{equation}
\mathbb P\!\left\{
\widehat C(u_{n+1})\le t
\,\middle|\,
\mathcal D_{\mathrm tr},\mathcal D_{\mathrm bias}
\right\}
\le t,
\qquad t\in[0,1].
\label{eq:finite_pit_calibration}
\end{equation}
Moreover, conditional on $\mathcal D_{\mathrm tr}$ and $\mathcal D_{\mathrm bias}$,
$\widehat C(u_{n+1})$ is discrete uniform on
$\{1/(N+1),\ldots,1\}$.
\end{theorem}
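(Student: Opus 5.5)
The plan is to reduce the statement to the classical fact that the rank of one element among $N+1$ exchangeable, almost surely distinct random variables is uniform.

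\textbf{Step 1: the PIT values are exchangeable.} Conditional on $\mathcal D_{\mathrm tr}$ and $\mathcal D_{\mathrm bias}$, the bias-correction rule is fixed. Hence each adjusted draw $\widehat Y^{(j)}_{i,\mathrm{adj}}$, and therefore $N_i$ and $N_i^*$, is a fixed measurable function of $\mathcal O_i$. By \eqref{eq:pit_random}, $u_i=g(\mathcal O_i,V_i)$ for a single deterministic map $g$ that does not depend on $i$. Applying a common map coordinatewise preserves exchangeability, so $\{u_i:i\in\mathcal I_{\mathrm cal}\cup\{n+1\}\}$ is exchangeable conditionally on $\mathcal D_{\mathrm tr},\mathcal D_{\mathrm bias}$. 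Since $0\le N_i$ and $N_i+N_i^*\le m$, we also have $u_i\in(0,1]$ almost surely. This lets us evaluate $\widehat C(u_{n+1})$ with the second branch of \eqref{eq:calibrator}.

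\textbf{Step 2: the PIT values are almost surely distinct.} This is the step I expect to be the main obstacle. Here I will use that the $V_i$ are i.i.d.\ $\Unif(0,1)$ and independent of the data, as stipulated after \eqref{eq:pit_random}. Condition on all the $\mathcal O_i$. Then $u_i$ is uniform on the nondegenerate interval
\[
\bigl[N_i/(m+1),\,(N_i+N_i^*+1)/(m+1)\bigr],
\]
independently across $i$. So $\mathbb P(u_i=u_k)=0$ for $i\ne k$, and a union bound over the finitely many pairs gives distinctness almost surely.

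If one only assumed the exchangeability in the theorem, ties could occur. In that case I would argue that ties can only increase $\#\{i:u_i\le u_{n+1}\}$, and hence increase $\widehat C(u_{n+1})$. That keeps the inequality \eqref{eq:finite_pit_calibration} intact, but exact discrete uniformity then genuinely needs the independence of the $V_i$.

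\textbf{Step 3: the rank is uniform.} Set $R=1+\#\{i\in\mathcal I_{\mathrm cal}:u_i\le u_{n+1}\}$. On the event of distinct values, $R$ is exactly the rank of $u_{n+1}$ among all $N+1$ values, and $\widehat C(u_{n+1})=R/(N+1)$. Exchangeability together with almost-sure distinctness implies that each of the $N+1$ positions is equally likely to hold the largest, second largest, and so on. Thus $R$ is uniform on $\{1,\ldots,N+1\}$, which gives the discrete-uniform claim.

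\textbf{Step 4: the calibration inequality.} For $t\in[0,1]$,
\[
\mathbb P\{\widehat C(u_{n+1})\le t\}=\frac{\lfloor t(N+1)\rfloor}{N+1}\le t,
\]
which is \eqref{eq:finite_pit_calibration}.
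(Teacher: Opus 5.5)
Your proof is correct and follows essentially the paper's route: the $u_i$ are exchangeable as a common function of $(\mathcal O_i,V_i)$, $\widehat C(u_{n+1})$ equals the upper rank of $u_{n+1}$ divided by $N+1$, the independent continuous $V_i$ make the PIT values almost surely distinct, and so the rank is uniform. The only difference is order: the paper first proves the super-uniform bound \eqref{eq:finite_pit_calibration} by a tie-robust counting argument (at most $k$ indices can have upper rank at most $k$) that uses exchangeability alone, and only then invokes distinctness for exact uniformity. You instead obtain the bound as a corollary of exact uniformity and only sketch the tie case.
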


Theorem~\ref{thm:bccpit} is the finite-sample calibration-in-probability
statement for CPIT. The conformal transform of the future PIT is super-uniform,
and in fact exactly uniform on the conformal rank grid. The theorem does not
claim that the untransformed PIT $u_{n+1}$ is uniform under misspecification.

\subsection{Finite-sample fixed-level coverage from PIT centrality}
\label{sec:pit_central_interval_theory}

When coverage at a prespecified level is required, the CPIT CDF can be followed
by a separate split-conformal layer. This layer uses the CDF only through the
scalar PIT-centrality score, so its validity follows from the usual exchangeable
rank argument. The CPIT CDF remains the object used for probability, risk, and
other distributional summaries.

\begin{theorem}[PIT-centrality conformal sets]
\label{thm:pit_central_interval}
Let $\widetilde F_x$ be an empirical or smoothed CPIT predictive CDF constructed
without using the interval-calibration split $\mathcal I_{\mathrm int}$. Let
$\mathcal D_{\mathrm CPIT}$ denote all data and auxiliary randomness used to
construct the CPIT rule, excluding $\mathcal I_{\mathrm int}$ and the test case.
Suppose that, conditional on $\mathcal D_{\mathrm CPIT}$,
\[
\{\mathcal O_i:i\in\mathcal I_{\mathrm int}\cup\{n+1\}\}
\]
is exchangeable. For $i\in\mathcal I_{\mathrm int}$, define the exact score sublevel set by
\eqref{eq:pit-centrality-set}. Then
\begin{equation}
\mathbb P\!\left\{
Y_{n+1}\in\mathcal C_\alpha^{\mathrm pit}(X_{n+1})
\,\middle|\,
\mathcal D_{\mathrm CPIT}
\right\}
\ge 1-\alpha.
\label{eq:pit_central_coverage}
\end{equation}
If, in addition, the $N_{\mathrm int}+1$ calibration and test scores are almost
surely distinct conditional on $\mathcal D_{\mathrm CPIT}$, then
\[
\mathbb P\!\left\{
Y_{n+1}\in\mathcal C_\alpha^{\mathrm pit}(X_{n+1})
\,\middle|\,
\mathcal D_{\mathrm CPIT}
\right\}
\le 1-\alpha+\frac{1}{|\mathcal I_{\mathrm int}|+1}.
\]
Moreover, if $\alpha_1<\alpha_2$, then
$\mathcal C_{\alpha_1}^{\mathrm pit}(x)\supseteq
\mathcal C_{\alpha_2}^{\mathrm pit}(x)$ for every $x$.
\end{theorem}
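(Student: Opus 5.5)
The plan is to reduce the statement to the standard split-conformal exchangeability argument, applied to the scalar score $S_i=S_{\mathrm pit}(X_i,Y_i)=|2\widetilde F_{X_i}(Y_i)-1|$. Conditional on $\mathcal D_{\mathrm CPIT}$, the map $\widetilde F_x$ is fixed. This holds whether it is the empirical or the smoothed CPIT CDF, because it depends only on the bias-correction rule, the calibration map $\widehat C_m$, the bandwidth rule and the generator draws at $x$, none of which use $\mathcal I_{\mathrm int}$. Each $S_i$ is therefore a fixed measurable function of $\mathcal O_i$; the generator draws needed to evaluate $\widetilde F_{X_i}$ are contained in $\mathcal O_i$ by \eqref{eq:O}. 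Exchangeability of $\{\mathcal O_i:i\in\mathcal I_{\mathrm int}\cup\{n+1\}\}$ then transfers to exchangeability of the scores $\{S_i:i\in\mathcal I_{\mathrm int}\cup\{n+1\}\}$.

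For the lower bound, write $N_{\mathrm int}=|\mathcal I_{\mathrm int}|$ and $k_\alpha=\lceil (N_{\mathrm int}+1)(1-\alpha)\rceil$. If $k_\alpha>N_{\mathrm int}$, then $q_\alpha^{\mathrm pit}=+\infty$ and coverage is trivial. Otherwise I will use the deterministic equivalence
\[
S_{n+1}\le S_{(k_\alpha)} \iff S_{n+1}\le \text{the $k_\alpha$-th smallest of } \{S_i:i\in\mathcal I_{\mathrm int}\cup\{n+1\}\},
\]
which holds because inserting $S_{n+1}$ shifts the order statistics lying above it but not those at or below it. By exchangeability, the event that $S_{n+1}$ is among the $k_\alpha$ smallest of the $N_{\mathrm int}+1$ scores has probability at least $k_\alpha/(N_{\mathrm int}+1)\ge 1-\alpha$; ties only help, since we count ranks with $\le$. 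The definition \eqref{eq:pit-centrality-set} gives $Y_{n+1}\in\mathcal C_\alpha^{\mathrm pit}(X_{n+1})$ if and only if $S_{n+1}\le q_\alpha^{\mathrm pit}$, which establishes \eqref{eq:pit_central_coverage}.

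For the upper bound under almost-sure distinctness, exchangeability makes the rank of $S_{n+1}$ uniform on $\{1,\dots,N_{\mathrm int}+1\}$. The coverage probability is therefore exactly $k_\alpha/(N_{\mathrm int}+1)<1-\alpha+1/(N_{\mathrm int}+1)$ when $k_\alpha\le N_{\mathrm int}$. The case $k_\alpha=N_{\mathrm int}+1$ also needs a check: there $1-\alpha>N_{\mathrm int}/(N_{\mathrm int}+1)$, so the bound exceeds $1$ and holds trivially.

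Nesting follows from monotonicity. If $\alpha_1<\alpha_2$, then $k_{\alpha_1}\ge k_{\alpha_2}$, so $q_{\alpha_1}^{\mathrm pit}\ge q_{\alpha_2}^{\mathrm pit}$ (with the $+\infty$ convention respected), and the sublevel sets are nested for every $x$.

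The only delicate step is the first one, justifying that the scores are exchangeable. Care is needed on two points. First, the smoothed version's bandwidth $\tau_x$ and the weights $w_j$ must be functions of $\mathcal O_i$ and $\mathcal D_{\mathrm CPIT}$ alone. Second, any Monte Carlo randomness in the generator must be part of $\mathcal O_i$, not fresh randomness shared across cases. I will state this explicitly before running the rank argument.
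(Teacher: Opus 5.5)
Your proposal is correct and follows the paper's proof essentially step for step. The scores are exchangeable because $S_{\mathrm{pit}}$ is a fixed function of $\mathcal O_i$ given $\mathcal D_{\mathrm{CPIT}}$; the lower bound is the standard split-conformal rank argument; the upper bound uses the uniform rank under almost-sure distinctness; and nesting follows from monotonicity of $k_\alpha$. Your extra details (the insertion equivalence for order statistics, the separate check when $k_\alpha=N_{\mathrm{int}}+1$, and the remarks that $\tau_x$ and the generator's Monte Carlo randomness must be determined by $\mathcal O_i$ and $\mathcal D_{\mathrm{CPIT}}$) only make explicit what the paper leaves implicit.
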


For a continuous and strictly increasing $\widetilde F_x$, the exact score set
is the interval in \eqref{eq:pit-centrality-inverse}. For a discontinuous CDF,
the score sublevel set in \eqref{eq:pit-centrality-set} is the object covered by
the theorem; replacing it by a larger closed interval preserves the lower
coverage bound but can invalidate the upper bound. Reusing the same calibration
scores across several values of $\alpha$ gives nested sets with levelwise
marginal guarantees, not a simultaneous coverage statement for the entire
random family.

\subsection{Smoothed weighted CPIT CDF}

The weighted CPIT law $\widetilde P_x$ is discrete because it is supported on
the $m$ adjusted generator order statistics. Gaussian smoothing converts this
law into a continuous full-support distribution. We quantify the perturbation
using the 1-Wasserstein distance. For probability measures $P$ and $Q$ on
$\mathbb R$ with finite first moments,
\[
W_1(P,Q)
=
\inf_{\gamma\in\Gamma(P,Q)}
\int |u-v|\,d\gamma(u,v),
\]
where $\Gamma(P,Q)$ is the set of couplings of $P$ and $Q$. By the
Kantorovich--Rubinstein duality,
\[
W_1(P,Q)
=
\sup_{\|h\|_{\mathrm{Lip}}\le 1}
\left|
\mathbb E_P\{h(Y)\}-\mathbb E_Q\{h(Y)\}
\right|.
\]
Thus $W_1$ directly controls the error of Lipschitz downstream summaries (see, e.g., \citet{villani2009optimal}).

\begin{proposition}
\label{prop:smoothed_regular}
For any $x$ and $\tau_x>0$, the smoothed CPIT CDF
$\widetilde F_x^\tau$ in \eqref{eq:smoothed-cdf} is the CDF of an absolutely
continuous distribution with density
\begin{equation}
\widetilde f_x^\tau(y)
=
\sum_{j=1}^m
w_j\,\tau_x^{-1}
\phi\!\left(
\frac{y-\widehat Y_{\mathrm{adj},(j)}(x)}{\tau_x}
\right).
\label{eq:smoothed_density_theory}
\end{equation}
The distribution has support $\mathbb R$, and $\widetilde F_x^\tau$ is strictly
increasing. Let $\widetilde P_x$ be the weighted empirical distribution in
\eqref{eq:weighted measure}, and let $\widetilde P_x^\tau$ be the distribution
with CDF $\widetilde F_x^\tau$. Then
\begin{equation}
W_1(\widetilde P_x^\tau,\widetilde P_x)
\le
\tau_x\sqrt{2/\pi}.
\label{eq:wasserstein_smoothing}
\end{equation}
Consequently, for every $L$-Lipschitz function $h$,
\begin{equation}
\left|
\mathbb E_{\widetilde P_x^\tau}\{h(Y)\}
-
\mathbb E_{\widetilde P_x}\{h(Y)\}
\right|
\le
L\tau_x\sqrt{2/\pi}.
\label{eq:lipschitz_smoothing}
\end{equation}
\end{proposition}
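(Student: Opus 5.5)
The plan is to recognize $\widetilde P_x^\tau$ as the law of $Y+\tau_x Z$, where $Y\sim\widetilde P_x$ and $Z\sim N(0,1)$ is independent of $Y$. Every claim in Proposition~\ref{prop:smoothed_regular} then follows from elementary properties of this convolution.

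\textbf{Density.} Conditioning on $Y$ gives
\[
\mathbb P(Y+\tau_x Z\le y)=\sum_{j=1}^m w_j\,\Phi\bigl((y-\widehat Y_{\mathrm{adj},(j)}(x))/\tau_x\bigr),
\]
which is exactly \eqref{eq:smoothed-cdf}. This is a finite sum of smooth functions, so I will differentiate it term by term. That yields \eqref{eq:smoothed_density_theory}. Since $\widetilde F_x^\tau$ is a continuously differentiable CDF, it equals the integral of its derivative, so the distribution is absolutely continuous.

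\textbf{Full support and strict monotonicity.} The weights are nonnegative and sum to one, as noted after \eqref{eq:weights}, so at least one $w_j$ is strictly positive. Because $\phi>0$ everywhere, this gives $\widetilde f_x^\tau(y)>0$ for every $y\in\mathbb R$. It follows that $\widetilde F_x^\tau(b)-\widetilde F_x^\tau(a)=\int_a^b\widetilde f_x^\tau>0$ for all $a<b$, so $\widetilde F_x^\tau$ is strictly increasing. It also follows that every open interval has positive mass, so the support is $\mathbb R$.

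\textbf{Wasserstein bound.} The pair $(Y,Y+\tau_x Z)$ is a coupling in $\Gamma(\widetilde P_x,\widetilde P_x^\tau)$. By the definition of $W_1$ as an infimum over couplings,
\[
W_1(\widetilde P_x^\tau,\widetilde P_x)\le \mathbb E|\tau_x Z|=\tau_x\,\mathbb E|Z|=\tau_x\sqrt{2/\pi}.
\]
Both measures have finite first moments: $\widetilde P_x$ is finitely supported, and a Gaussian mixture has all moments. So $W_1$ is well defined, which is \eqref{eq:wasserstein_smoothing}.

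\textbf{Lipschitz consequence.} The cleanest route to \eqref{eq:lipschitz_smoothing} uses the same coupling directly. For an $L$-Lipschitz $h$,
\[
\bigl|\mathbb E h(Y+\tau_x Z)-\mathbb E h(Y)\bigr|\le L\,\mathbb E|\tau_x Z|=L\tau_x\sqrt{2/\pi}.
\]
This bounds the difference in expectations of the two laws. Equivalently, one can apply Kantorovich--Rubinstein duality to $h/L$.

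No step is a genuine obstacle. The only points needing care are the facts that some $w_j>0$ (needed for strict positivity of the density) and the value $\mathbb E|Z|=\sqrt{2/\pi}$.
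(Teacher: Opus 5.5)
Your proposal is correct and follows essentially the same route as the paper's proof. You read $\widetilde P_x^\tau$ as the law of $z_J+\tau_x Z$, differentiate term by term, use some $w_j>0$ together with $\phi>0$ to get full support and strict monotonicity, and use the coupling $(z_J,\,z_J+\tau_x Z)$ for both the $W_1$ bound and the Lipschitz bound.
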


Proposition~\ref{prop:smoothed_regular} is a stability statement rather than an
additional conformal guarantee. It shows that smoothing regularizes the
predictive law while perturbing any Lipschitz summary by at most a quantity
proportional to the bandwidth.

\subsection{Stability with respect to the calibration map}

The CPIT predictive distribution depends on the fitted calibration map only
through the normalized rank-cell weights in \eqref{eq:weights}. The next theorem
shows that uniform estimation of this one-dimensional map yields uniform control
of the resulting predictive CDF. The normalization by
$\widehat C\{m/(m+1)\}$ conditions the first $m$ rank-cell masses on the portion
of the grid represented by the $m$ generated order statistics. The remaining
rank cell, corresponding to a response above all generated values, has no
separate atom in the finite-support predictive law.

\begin{theorem}[Stability of the weighted CPIT CDF]
\label{thm:dkw_stability}
Suppose that, conditional on $\mathcal D_{\mathrm tr}$ and
$\mathcal D_{\mathrm bias}$, the calibration PIT values are independent with common
CDF
\[
C^\star(t)
=
\mathbb P\{u_i\le t\mid\mathcal D_{\mathrm tr},\mathcal D_{\mathrm bias}\},
\qquad t\in[0,1].
\]
Assume $C^\star\{m/(m+1)\}>0$, and define the oracle normalized rank-cell
weights by
\[
w_j^\star
=
\frac{C^\star\{j/(m+1)\}-C^\star\{(j-1)/(m+1)\}}
     {C^\star\{m/(m+1)\}},
\qquad j=1,\ldots,m.
\]
Let $F_{x,C^\star}^\tau$ be the smoothed CDF formed from these oracle weights,
using the same ordered adjusted samples and bandwidths as
$\widetilde F_x^\tau$. For $\varepsilon>0$, define
\[
\delta_{N,\varepsilon}=\varepsilon+\frac{1}{N+1}.
\]
If $\delta_{N,\varepsilon}<C^\star\{m/(m+1)\}$, then for every collection
$\mathcal X_0$ of covariate values,
\begin{equation}
\mathbb P\!\left\{
\sup_{x\in\mathcal X_0}\sup_{y\in\mathbb R}
\left|
\widetilde F_x^\tau(y)-F_{x,C^\star}^\tau(y)
\right|
>
\frac{2\delta_{N,\varepsilon}}
{C^\star\{m/(m+1)\}-\delta_{N,\varepsilon}}
\,\middle|\,
\mathcal D_{\mathrm tr},\mathcal D_{\mathrm bias}
\right\}
\le
2\exp(-2N\varepsilon^2).
\label{eq:dkw_stability}
\end{equation}
The same bound holds for the corresponding unsmoothed weighted empirical CDFs.
\end{theorem}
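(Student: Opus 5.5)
The plan is to combine the Dvoretzky--Kiefer--Wolfowitz--Massart inequality for the empirical CDF of the calibration PIT values with a deterministic perturbation bound for normalized increments. Let $\widehat G_N(t)=N^{-1}\sum_{i\in\mathcal I_{\mathrm cal}}\1\{u_i\le t\}$. Conditional on $\mathcal D_{\mathrm tr},\mathcal D_{\mathrm bias}$, the $u_i$ are i.i.d.\ with CDF $C^\star$. The Massart form of DKW then gives
\[
\mathbb P\bigl\{\sup_t|\widehat G_N(t)-C^\star(t)|>\varepsilon\bigr\}\le 2\exp(-2N\varepsilon^2).
\]
It applies to arbitrary CDFs, including ones with atoms. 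On the complementary event $E_\varepsilon$, we compare $\widehat C$ with $\widehat G_N$. For $t>0$,
\[
\widehat C(t)-\widehat G_N(t)=\frac{1+N\widehat G_N(t)}{N+1}-\widehat G_N(t)=\frac{1-\widehat G_N(t)}{N+1}\in\Bigl[0,\frac{1}{N+1}\Bigr].
\]
Hence $\sup_{t>0}|\widehat C(t)-C^\star(t)|\le\delta_{N,\varepsilon}$ on $E_\varepsilon$. The point $t=0$ needs separate care. There $\widehat C(0)=0$, while $C^\star(0)=\mathbb P(u_i\le0)=0$ because $u_i>0$ almost surely. The bound therefore holds at all grid points $j/(m+1)$, $j=0,\ldots,m$.

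The deterministic step writes both CDFs as functionals of the grid values. Set $A=C^\star\{m/(m+1)\}$ and $\widehat A=\widehat C\{m/(m+1)\}$. For an unsmoothed or smoothed CDF built on the same ordered samples, Abel summation gives
\[
F(y)=\sum_j w_j K_j(y)=\sum_{j=1}^m c_{j}\{K_j(y)-K_{j+1}(y)\},
\]
where $K_j(y)=\1\{\widehat Y_{\mathrm{adj},(j)}\le y\}$ or $\Phi\{(y-\widehat Y_{\mathrm{adj},(j)})/\tau_x\}$. Here $K_{m+1}\equiv0$ and $c_j=w_1+\cdots+w_j$ is the normalized cumulative map at $j/(m+1)$. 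Since the order statistics are increasing, $K_j(y)\ge K_{j+1}(y)$, so the differences $K_j-K_{j+1}$ are nonnegative and sum (telescoping) to $K_1(y)\le1$. It follows that $|\widetilde F_x^\tau(y)-F^\tau_{x,C^\star}(y)|\le\max_j|\widehat c_j-c_j^\star|$, uniformly in $x$ and $y$. The same holds for the unsmoothed versions. This is why the supremum over an arbitrary $\mathcal X_0$ costs nothing: the random event depends only on the calibration sample.

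It remains to bound the ratio perturbation. Write $\widehat c_j=\widehat C_j/\widehat A$ and $c^\star_j=C^\star_j/A$ with $C_j=C\{j/(m+1)\}$. Then
\[
\widehat c_j-c^\star_j=\frac{\widehat C_j-C^\star_j}{\widehat A}+C^\star_j\Bigl(\frac1{\widehat A}-\frac1A\Bigr)=\frac{\widehat C_j-C^\star_j}{\widehat A}+\frac{C^\star_j}{A}\cdot\frac{A-\widehat A}{\widehat A}.
\]
On $E_\varepsilon$ we have $\widehat A\ge A-\delta_{N,\varepsilon}>0$ by the assumption $\delta_{N,\varepsilon}<A$, and $C^\star_j/A\le1$ for $j\le m$. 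Each term is therefore at most $\delta_{N,\varepsilon}/(A-\delta_{N,\varepsilon})$, giving the bound $2\delta_{N,\varepsilon}/(A-\delta_{N,\varepsilon})$ in \eqref{eq:dkw_stability}.

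The main point requiring care is the grid alignment. The weights \eqref{eq:weights} are defined through $\widehat C_m(j/m)=\widehat C(j/(m+1))/\widehat A$, whereas the oracle weights use $C^\star\{j/(m+1)\}$. I will check these match exactly, so that the cumulative sums agree on the same grid. I will also check that the $t=0$ convention and possible atoms of $C^\star$ (from tied samples) do not break the DKW step. Massart's inequality holds for general distributions, so I expect this to go through.
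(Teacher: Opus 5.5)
Your proposal is correct and follows the paper's proof essentially step for step. The common ingredients are DKW--Massart applied to the empirical CDF of the calibration PITs, the $1/(N+1)$ conformal shift (with $C^\star(0)=0$ because $u_i>0$ a.s.), the two-term ratio perturbation giving $2\delta_{N,\varepsilon}/(C^\star\{m/(m+1)\}-\delta_{N,\varepsilon})$, and summation by parts against the kernels $K_j$, which are nonincreasing in $j$ because the samples are ordered.

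The grid alignment you flag holds exactly, since $\widehat C_m(j/m)=\widehat C\{j/(m+1)\}/\widehat C\{m/(m+1)\}$. Your concern about atoms does not arise: given the counts, each randomized PIT is uniform on a rank cell of positive length, so $C^\star$ is continuous.
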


The theorem compares the fitted CPIT CDF with the oracle finite-support CDF
obtained from the population PIT calibration map $C^\star$. The term
$(N+1)^{-1}$ is the conformal rank correction in $\widehat C$. The bound is
uniform in the covariate values, adjusted sample locations, and bandwidths
because these quantities enter both CDFs identically, with only their common weights
differing. Together, Theorems~\ref{thm:bccpit} and \ref{thm:dkw_stability} separate
two roles of the calibration sample. Conformal ranking gives finite-sample
calibration in probability, while a one-dimensional empirical-process bound
controls estimation of the predictive CDF.

To distinguish this calibration-map error from other sources of approximation,
let
\[
F_x^0(y)=\mathbb P(Y\le y\mid X=x)
\]
be the true conditional CDF, let $H_x$ be the population CDF of the
bias-adjusted generator, and define the population oracle recalibration
\[
F_x^\star(y)=C^\star\{H_x(y)\}.
\]
For each $x$,
\[
\sup_y|\widetilde F_x^\tau(y)-F_x^0(y)|
\le
\sup_y|\widetilde F_x^\tau(y)-F_x^\star(y)|
+
\sup_y|F_x^\star(y)-F_x^0(y)|.
\]
The first term is controlled by Theorem~\ref{thm:dkw_stability}. The second term is an oracle approximation error. It vanishes only when the true conditional law can be represented as a global PIT recalibration of the bias-adjusted generator,
\[
F_x^0=C^\star\circ H_x.
\]
Thus CPIT provides finite-sample rank calibration and a stable predictive law,
whereas closeness to the full conditional distribution additionally depends on
the adequacy of the bias-adjusted generator and the use of a global calibration
map.

\section{Simulation Studies}
\label{sec:sim}

\subsection{Setup and metrics}
\label{sec:sim_setup}

We use three controlled simulations to separate the main ways in which a sample-only generator can fail. In each Monte Carlo replicate, the generator is specified analytically, so no training split is needed, and the source of misspecification is known. The data are split into bias, calibration, and test sets with
\[
 n_{\mathrm{bias}}=1000,
 \qquad
 n_{\mathrm{cal}}=N=1000,
 \qquad
 n_{\mathrm{test}}=5000.
\]
Unless otherwise stated, we use $m=100$ generator samples per covariate value and repeat each configuration over 100 replications. All post-processing methods receive only the generator samples and the observed responses.

The three designs are as follows.
\begin{enumerate}[leftmargin=1.6em]
\item Global location-scale distortion:
$X\sim \Unif(0,1)$,
$Y\mid X=x\sim N\{\mu(x),0.15^2\}$, and
$\mu(x)=\sin(2\pi x)$. The generator is globally biased:
\[
\widehat Y\mid X=x
\sim
N\!\left(
\frac{\mu(x)-0.25}{1.3},
\left(\frac{0.15}{1.3}\right)^2
\right).
\]
This design is favorable to a global affine correction.

\item Covariate-dependent location-scale distortion:
$X\sim \Unif(0,1)$,
$Y=\sin(2\pi X)+\sigma(X)\varepsilon$,
$\sigma(x)=0.05+0.25x$, and $\varepsilon\sim N(0,1)$. The generator has an
$x$-dependent mean and scale error:
\[
\widehat Y\mid X=x
\sim
N\!\Bigl(
0.85\sin(2\pi x)+0.15x-0.05,
\{0.08+0.10(1-x)\}^2
\Bigr).
\]
This design requires covariate-adaptive correction.

\item Distributional shape misspecification:
$X\sim \Unif(0,1)$,
$Y=\sin(2\pi X)+\sigma(X)(\eta-1)$,
$\sigma(x)=0.10+0.20x$, and $\eta\sim \mathrm{Exp}(1)$. The generator matches
the first two conditional moments but imposes a Gaussian predictive shape:
\[
\widehat Y\mid X=x
\sim
N\!\bigl(\sin(2\pi x),\sigma^2(x)\bigr).
\]
Thus affine correction alone cannot repair the skewed conditional law.
\end{enumerate}

We compare the raw generator, global and GAM affine bias corrections, CPIT
applied after each bias correction, and the fixed-level split-conformal
baselines in Section~\ref{sec:score_baselines}. We assess distributional
calibration using randomized PIT histograms, the Cram\'er--von Mises (CvM)
distance of the PIT distribution from uniformity, and quantile calibration
error (QErr). Specifically, QErr is the mean absolute difference between the
empirical coverage of each predictive quantile and its nominal level, averaged
over $\{0.05,0.10,\ldots,0.95\}$. We also report mean CRPS, central interval
coverage and length, and local diagnostics within five bins of $X$. Detailed
metric definitions, interval results for
$\alpha\in\{0.02,0.05,0.10,0.20\}$, local bin diagnostics, and the
Simulation~3 upper-tail QErr results are provided in the supplementary
material.

\subsection{Results}
\label{sec:sim_results}

Tables~\ref{tab:sim-dist-all} and~\ref{tab:sim-interval90} summarize the global
distributional diagnostics and the nominal 90\% interval results, respectively.
Figure~\ref{fig:sim-pit} shows the corresponding randomized PIT histograms,
averaged over the 100 Monte Carlo replications.

\begin{table}[tb]
\centering
\scriptsize
\setlength{\tabcolsep}{4pt}
\renewcommand{\arraystretch}{1.12}
\caption{Global distributional diagnostics for the three simulations over 100 replications. Entries are means, with Monte Carlo standard errors in parentheses. Smaller values indicate better performance for CvM, QErr, and mean CRPS.}
\label{tab:sim-dist-all}
\smallskip
\begin{tabular}{llrcc}
\toprule
Simulation & Method & CvM & QErr & Mean CRPS \\
\midrule
1 & Raw
& 474.5615 (1.3530) & 0.2745 (0.0004) & 0.1915 (0.0002) \\
  & BC(Global)
& 0.5884 (0.0701) & 0.0098 (0.0005) & 0.0854 (0.0001) \\
  & BC(GAM)
& 0.6671 (0.0729) & 0.0100 (0.0005) & 0.0856 (0.0001) \\
  & CPIT(Global)
& 0.9462 (0.0828) & 0.0114 (0.0005) & 0.0854 (0.0001) \\
  & CPIT(GAM)
& 0.9433 (0.0832) & 0.0111 (0.0005) & 0.0857 (0.0001) \\
\addlinespace
\midrule
2 & Raw
& 47.3461 (0.2167) & 0.0885 (0.0002) & 0.1436 (0.0002) \\
  & BC(Global)
& 3.8295 (0.1877) & 0.0243 (0.0007) & 0.1138 (0.0002) \\
  & BC(GAM)
& 0.6950 (0.0754) & 0.0101 (0.0005) & 0.1002 (0.0001) \\
  & CPIT(Global)
& 0.8816 (0.0743) & 0.0111 (0.0004) & 0.1135 (0.0002) \\
  & CPIT(GAM)
& 0.9550 (0.0810) & 0.0111 (0.0005) & 0.1002 (0.0001) \\
\addlinespace
\midrule
3 & Raw
& 35.6368 (0.2793) & 0.0757 (0.0003) & 0.1066 (0.0002) \\
  & BC(Global)
& 35.0022 (0.5079) & 0.0746 (0.0006) & 0.1066 (0.0002) \\
  & BC(GAM)
& 31.4727 (0.6454) & 0.0706 (0.0007) & 0.1069 (0.0002) \\
  & CPIT(Global)
& 1.0241 (0.1086) & 0.0120 (0.0005) & 0.1023 (0.0002) \\
  & CPIT(GAM)
& 1.0251 (0.1130) & 0.0120 (0.0006) & 0.1028 (0.0002) \\
\bottomrule
\end{tabular}
\end{table}

\begin{table}[tb]
\centering
\scriptsize
\setlength{\tabcolsep}{4pt}
\renewcommand{\arraystretch}{1.12}
\caption{Nominal 90\% interval coverage and average length for the three simulations. Each entry reports mean coverage followed by mean length, separated by a slash. Standard errors, additional nominal levels, and HDR variants are reported in the supplementary material.}
\label{tab:sim-interval90}
\smallskip
\begin{tabular}{lccc}
\toprule
Method & Simulation 1 & Simulation 2 & Simulation 3 \\
\midrule
Raw & 0.444 / 0.3683 & 0.641 / 0.4150 & 0.924 / 0.6384 \\
BC(Global) & 0.883 / 0.4794 & 0.801 / 0.5463 & 0.923 / 0.6386 \\
BC(GAM) & 0.885 / 0.4838 & 0.885 / 0.5668 & 0.917 / 0.6264 \\
CPIT(Global) & 0.890 / 0.4901 & 0.887 / 0.7735 & 0.888 / 0.5867 \\
CPIT(GAM) & 0.891 / 0.4930 & 0.891 / 0.5783 & 0.887 / 0.5956 \\
\addlinespace
\midrule
QR(Raw) & 0.900 / 0.9677 & 0.902 / 0.8571 & 0.901 / 0.5584 \\
QR(Global) & 0.901 / 0.5052 & 0.902 / 0.7592 & 0.901 / 0.5612 \\
QR(GAM) & 0.901 / 0.5076 & 0.902 / 0.5908 & 0.901 / 0.5763 \\
SR(Raw) & 0.899 / 0.9705 & 0.902 / 0.9839 & 0.900 / 0.5302 \\
SR(Global) & 0.901 / 0.5007 & 0.902 / 0.8164 & 0.900 / 0.5338 \\
SR(GAM) & 0.901 / 0.5033 & 0.902 / 0.5902 & 0.900 / 0.5516 \\
eCRPS(Raw) & 0.899 / 0.9635 & 0.902 / 0.8035 & 0.899 / 0.5361 \\
eCRPS(Global) & 0.900 / 0.4961 & 0.901 / 0.6816 & 0.899 / 0.5371 \\
eCRPS(GAM) & 0.900 / 0.4975 & 0.902 / 0.6002 & 0.899 / 0.5422 \\
\bottomrule
\end{tabular}
\end{table}

\begin{figure}[tb]
\centering
\maybeincludegraphics[width=0.8\textwidth]{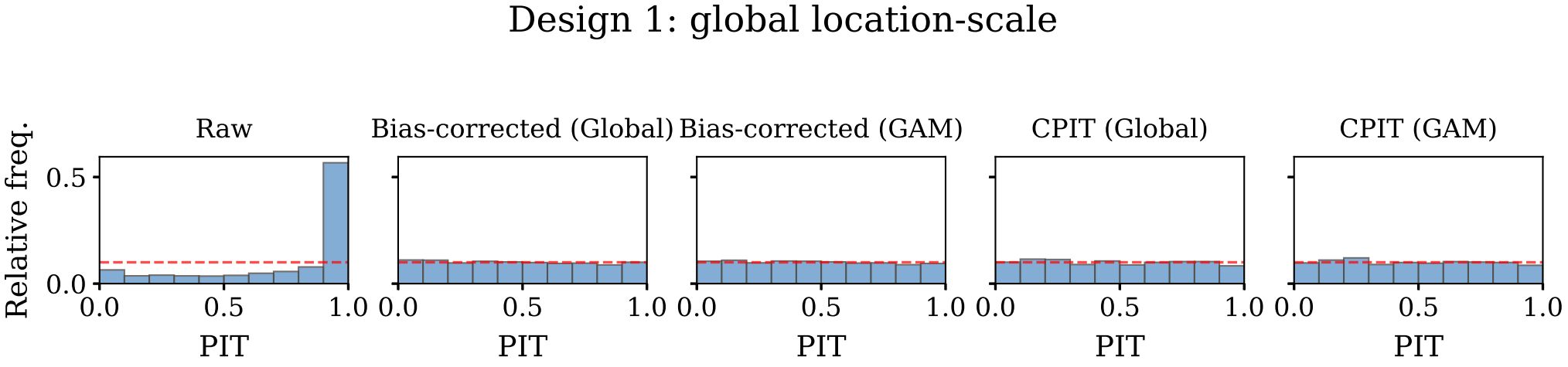}
\vspace{0.2em}\\
\maybeincludegraphics[width=0.8\textwidth]{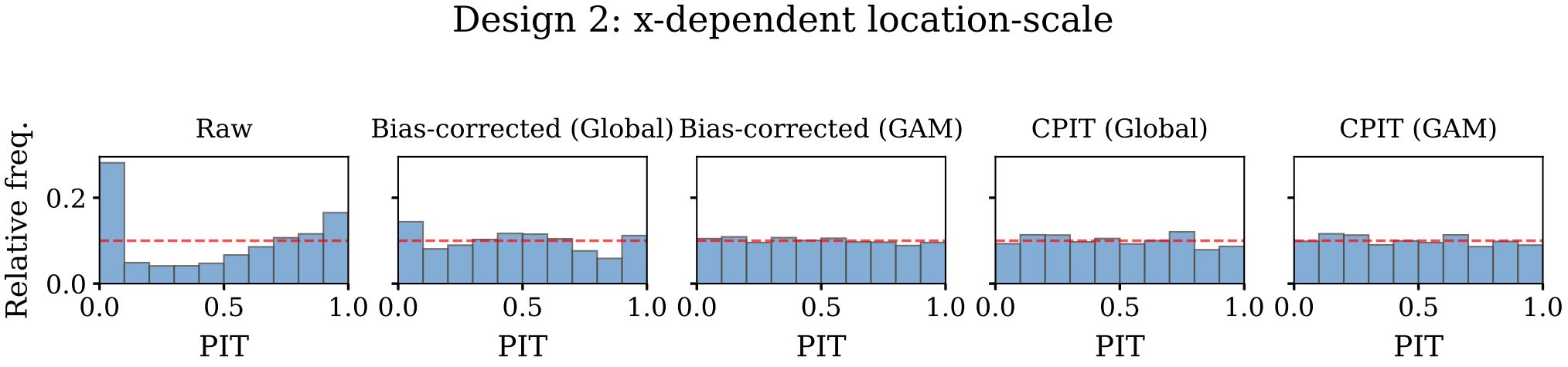}
\vspace{0.2em}\\
\maybeincludegraphics[width=0.8\textwidth]{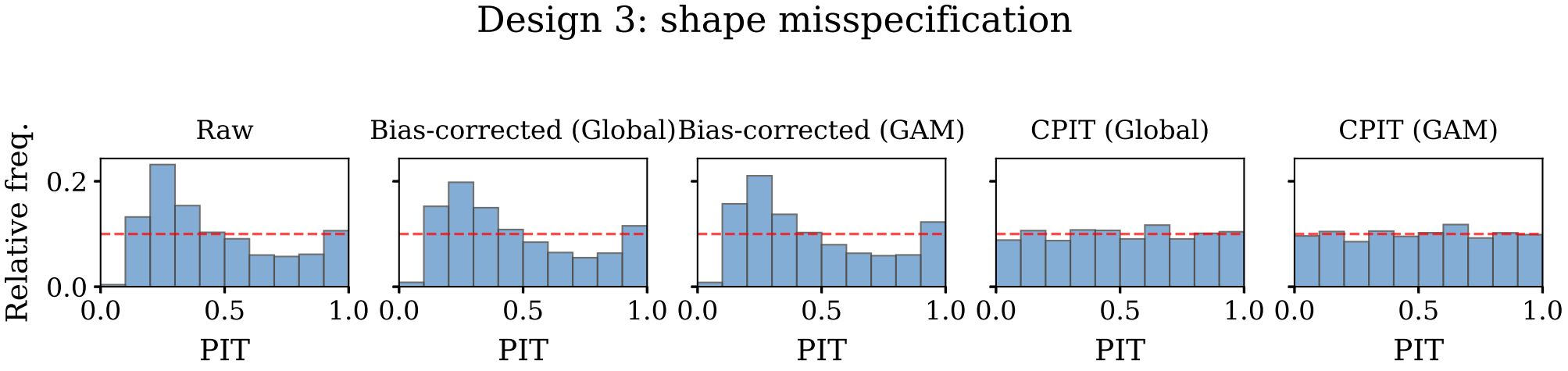}
\caption{Randomized PIT histograms for the three simulation designs, averaged over 100 replications. Within each row, the panels show the raw generator, global and GAM bias correction, and CPIT after each bias correction. The dashed line indicates the uniform reference.}
\label{fig:sim-pit}
\end{figure}

In Simulation~1, the global affine model is correctly specified. BC(Global)
reduces CvM from $474.5615$ to $0.5884$, QErr from $0.2745$ to $0.0098$, and
mean CRPS from $0.1915$ to $0.0854$. CPIT retains essentially the same CRPS
and yields nearly uniform PIT histograms. Its CvM values, $0.9462$ and
$0.9433$, are somewhat larger than the $0.5884$ attained by the correctly
specified global affine correction. As shown in Table~\ref{tab:sim-interval90},
CPIT also moves central 90\% coverage closer to the nominal level: coverage
increases from $0.883$ and $0.885$ under BC(Global) and BC(GAM) to $0.890$ and
$0.891$ under the corresponding CPIT corrections, with only modest increases
in average length.

Simulation~2 highlights the need for covariate-adaptive correction. BC(Global)
improves the raw generator but leaves appreciable miscalibration, with CvM
$3.8295$, QErr $0.0243$, and central 90\% coverage $0.801$. BC(GAM) lowers CvM
and QErr to $0.6950$ and $0.0101$, respectively, and raises coverage to
$0.885$. Both CPIT variants yield strong global calibration, with CvM below
$0.96$ and QErr $0.0111$. CPIT(GAM) additionally preserves the lower mean CRPS
of $0.1002$ and moves coverage to $0.891$, while maintaining a relatively small average length. The
supplementary binwise diagnostics reveal a distinction that is not apparent
from the global summaries: CPIT(Global) remains uneven across $X$, whereas
BC(GAM) and CPIT(GAM) achieve nearly uniform coverage across the five
covariate bins. Thus, favorable marginal diagnostics can mask residual
covariate-dependent miscalibration.

Simulation~3 isolates shape misspecification that affine location-scale
correction cannot remove. Bias correction alone has little effect: CvM remains
above $31$, QErr remains near $0.07$, and mean CRPS is essentially unchanged.
By contrast, CPIT reduces CvM to about $1.02$ and QErr to $0.0120$, while also
lowering mean CRPS to $0.1023$ under global correction and $0.1028$ under GAM
correction. Supplementary Table~S7 reports upper-tail calibration over quantile levels
$\tau\in\{0.96,0.97,0.98,0.99\}$. The corresponding upper-tail QErr decreases
from $0.0328$ for the raw generator and $0.0331$ after BC(Global) to $0.0134$
for CPIT(Global) and $0.0162$ for CPIT(GAM). The central CPIT intervals in
Table~\ref{tab:sim-interval90} have 90\% coverages $0.888$ and $0.887$, with
average lengths $0.5867$ and $0.5956$, respectively. The smoothed HDR variants
reported in the supplementary material have coverages $0.916$ and $0.915$.
Together, these results illustrate CPIT's role as a distributional
recalibrator rather than merely a location-scale adjustment.

Table~\ref{tab:sim-interval90} also compares CPIT with the interval-only QR, SR, and eCRPS split-conformal baselines. As expected, these methods attain coverage close to the nominal 90\% level in all three simulations. CPIT's central intervals have coverage between $0.887$ and $0.891$ and are competitive in length in Simulations~1 and~2, particularly after GAM correction in Simulation~2; the score-based baselines are in general slightly shorter in Simulation~3. The purpose of this comparison is therefore not to suggest that nominal fixed-level coverage is difficult to obtain, but to identify what CPIT adds beyond it. Unlike the interval-only baselines, CPIT returns a calibrated predictive CDF, so its quantiles, intervals, threshold probabilities, tail summaries, and resamples all arise from a single coherent predictive law.


\section{WeatherBench 2 Precipitation Applications}
\label{sec:weatherbench}

We evaluate CPIT using WeatherBench 2 forecasts of 24-hour accumulated
precipitation from the 50-member ECMWF Integrated Forecasting System
ensemble (IFS-ENS), with ERA5 reanalysis fields used for verification
\citep{rasp2024weatherbench2}. The forecasts are initialized twice daily,
at 00 and 12 UTC, and evaluated at a 24-hour lead over 2018-2022.
Forecasts and verifications are represented on the $1.5^\circ$
equiangular grid comprising 240 longitude points and 121 latitude points,
including both poles. The data are available through the WeatherBench 2
data archive.\footnote{\url{https://weatherbench2.readthedocs.io/en/latest/data-guide.html}}
After removing three initialization times for which either the IFS-ENS
forecast or the corresponding ERA5 verification is unavailable,
$n=3{,}649$ forecast-verification cases remain. We randomly partition
these cases into a bias-correction set
($n_{\mathrm{bias}}=1{,}204$), a PIT-calibration set
($n_{\mathrm{cal}}=N=1{,}204$), and a test set
($n_{\mathrm{test}}=1{,}241$).

For each forecast-verification case, we map the full spatial field to a scalar target. The
sample-only predictive distribution for that target is the empirical
distribution of the $m=50$ IFS-ENS ensemble members after the same spatial
mapping. Let $Z_i(s)$ denote the ERA5 verifying precipitation at grid cell
$s$, and let $\widehat Z_i^{(b)}(s)$, $b=1,\ldots,m$, denote the corresponding
IFS-ENS ensemble precipitation forecasts. For target functional $T_\ell$,
define
\[
Y_{i\ell}=T_\ell(Z_i),
\qquad
\widehat Y_{i\ell}^{(b)}=T_\ell(\widehat Z_i^{(b)}),
\qquad b=1,\ldots,m .
\]
Thus $Y_{i\ell}$ is the scalar verifying response, while
$\widehat Y_{i\ell}^{(1)},\ldots,\widehat Y_{i\ell}^{(m)}$ are the
sample-only predictive draws supplied by IFS-ENS for the same target and
verification time. The covariate $X_{i\ell}$ collects the information
available for post-processing at forecast initialization, including the date,
the ensemble forecast, and auxiliary atmospheric fields. For the GAM affine
correction, this information is summarized by a low-dimensional feature vector.

We consider two precipitation experiments. The Europe experiment uses the
region $35^\circ$-$75^\circ$N and $12.5^\circ$W-$42.5^\circ$E. It evaluates
both the area-weighted regional mean, $T_{\mathrm EU}^{\mathrm mean}$, and the
regional upper-tail target, $T_{\mathrm EU}^{0.95}$. The Taiwan experiment focuses
on four individual WeatherBench-2 grid cells covering western and eastern
Taiwan. Their domains are shown in Figure~\ref{fig:wb-eu-map}.

\begin{figure}[tb]
\centering
\begin{tabular}{cc}
(a) & (b) \\
\maybeincludegraphics[width=0.35\textwidth]{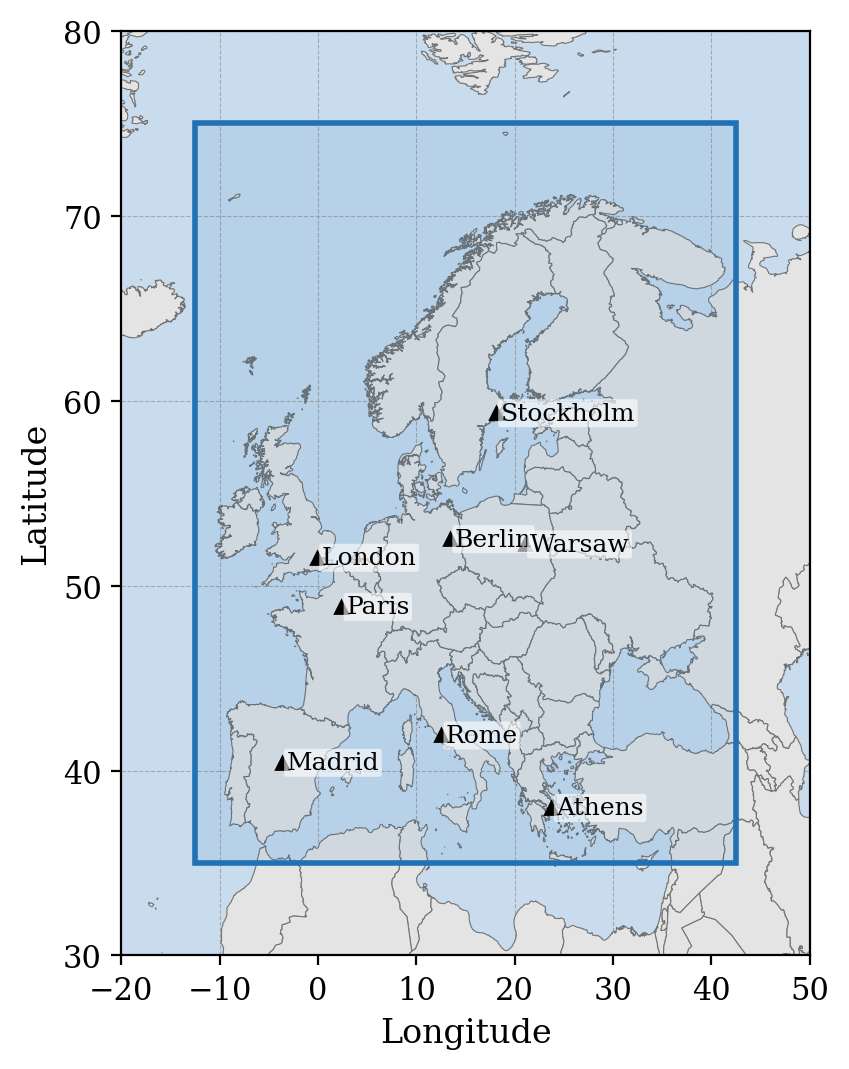} &
\maybeincludegraphics[width=0.37\textwidth]{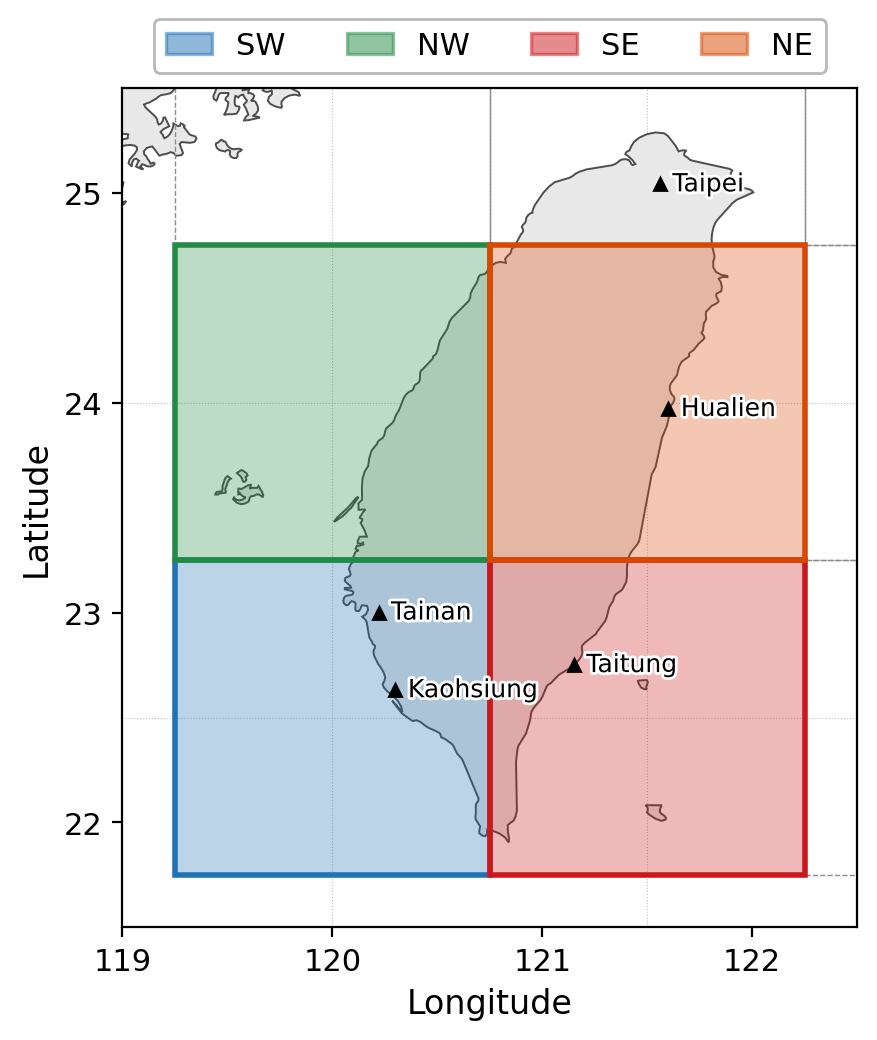}
\end{tabular}
\vspace{-0.3cm}
\caption{(a) Europe WeatherBench-2 region used for area-weighted aggregation:
$35^\circ$-$75^\circ$N and $-12.5^\circ$-$42.5^\circ$E. (b) Four WB2 $1.5^\circ$ grid cells used in the Taiwan experiment. The cells are SW $(22.5^\circ{\mathrm N},120^\circ{\mathrm E})$, NW $(24^\circ{\mathrm N},120^\circ{\mathrm E})$, SE $(22.5^\circ{\mathrm N},121.5^\circ{\mathrm E})$, and NE $(24^\circ{\mathrm N},121.5^\circ{\mathrm E})$.}
\label{fig:wb-eu-map}
\end{figure}

For a region $R$, let $a_s$ denote the area weight of grid cell $s$. The area-weighted mean target is
\[
T_R^{\mathrm mean}(Z_i)
=
\frac{\sum_{s\in R}a_s Z_i(s)}{\sum_{s\in R}a_s}.
\]
We also consider the weighted empirical upper-tail summary $T_R^q(Z_i)$,
defined as the weighted $q$-quantile of $\{Z_i(s):s\in R\}$ with weights
$\{a_s:s\in R\}$. For the Taiwan four-cell experiment, each target is a
singleton cell. If $s_\ell$ is one of the four selected Taiwan cells, then
$T_\ell(Z_i)=Z_i(s_\ell)$.

For both Europe and Taiwan, precipitation is retained in meters for CPIT calibration and evaluation, while the affine correction is fitted on the log-millimeter scale
\[
g(y)=\log(y+0.1),
\]
where $y$ is precipitation in millimeters. On the transformed scale, define
\[
\widehat G_{i\ell}^{(b)}=g\big(\widehat Y_{i\ell}^{(b)}\big),
\qquad
\widehat\mu_{i\ell}=\frac{1}{m}\sum_{b=1}^m \widehat G_{i\ell}^{(b)},
\]
and
\[
\widehat\sigma_{i\ell}^2
=
\frac{1}{m-1}\sum_{b=1}^m\big(\widehat G_{i\ell}^{(b)}-\widehat\mu_{i\ell}\big)^2 .
\]
For Europe, the GAM affine correction uses the three-feature vector
\[
\phi^{\mathrm EU}_{i\ell}
=
\bigl(d_i,\widehat\mu_{i\ell},\log(\widehat\sigma_{i\ell}+\varepsilon)\bigr)' ,
\]
whereas for Taiwan it uses the six-feature vector
\[
\phi^{\mathrm TW}_{i\ell}
=
\bigl(
d_i,\widehat\mu_{i\ell},\log(\widehat\sigma_{i\ell}+\varepsilon),
u_{850,i\ell},v_{850,i\ell},{\mathrm TCWV}_{i\ell}
\bigr)' .
\]
Here $\varepsilon>0$ is a small numerical constant. In both analyses, $d_i$ denotes day of year and is modeled with a cyclic P-spline to capture annual periodicity. Each smooth term uses five spline basis functions, and the smoothing penalty is selected by GCV. The variables
$u_{850,i\ell}$ and $v_{850,i\ell}$ are the local zonal and meridional wind
components at 850 hPa, and ${\mathrm TCWV}_{i\ell}$ denotes total column water
vapor.

We assess threshold-event performance using Brier skill scores. In the formulas below, $c$ is expressed on the original millimeter scale. For threshold $c$, let
\[
\widehat\pi_{i\ell}(c)
=
1-\widetilde F^\tau_{i\ell}(c)
\]
denote the CPIT exceedance probability on the original precipitation scale,
where $\widetilde F^\tau_{i\ell}$ is the smoothed weighted CDF in
\eqref{eq:smoothed-cdf}. The Brier score is
\[
\mathrm{BS}_\ell(c)
=
\frac1{n_{\mathrm test}}
\sum_{i\in\mathcal I_{\mathrm test}}
\left\{
\widehat\pi_{i\ell}(c)-\mathbf 1(Y_{i\ell}>c)
\right\}^2,
\]
and the Brier skill score is
\[
\mathrm{BSS}_\ell(c)
=
1-
\frac{\mathrm{BS}_\ell(c)}{\mathrm{BS}_{\ell,\rm ref}(c)}.
\]
Here $\mathrm{BS}_{\ell,\rm ref}(c)$ is the Brier score of the climatological
reference forecast, using the empirical exceedance rate from the combined bias
and calibration splits. For Taiwan, we consider $c\in\{10,20,40,80\}$. The 80 mm event
occurs only about five times in the $1241$ Taiwan test cases, so BSS at this
threshold is statistically unstable and can become negative after only a small
number of false alarms.



\subsection{Europe regional targets}

The European experiment evaluates CPIT for two regional precipitation
functionals with distinct meteorological interpretations. The domain shown in
Figure~\ref{fig:wb-eu-map}(a), spanning $35^\circ$-$75^\circ$N and
$12.5^\circ$W-$42.5^\circ$E, covers the North Atlantic-European storm
track, the Mediterranean basin, and northern Europe. All spatial summaries
use area weights, so that each grid cell contributes according to its physical
area rather than receiving equal weight on the latitude-longitude grid. The
regional mean summarizes the domain-wide 24-hour precipitation burden. The
p95 target, defined as the area-weighted 95th percentile of 24-hour
precipitation across grid cells, summarizes the spatial upper tail and is
therefore more sensitive to localized precipitation maxima associated with
frontal, convective, or orographic processes.

Table~\ref{tab:wb-eu-dist} reports complementary diagnostics of distributional
reliability, sharpness, and threshold-event skill. We use randomized PIT CvM
and QErr to assess full-distribution and quantile calibration, empirical
coverage of central 90\% intervals to assess interval validity, mean CRPS to
summarize proper-score performance, and Brier skill scores relative to
empirical climatology for exceedance probabilities.

\begin{table}[tb]
\centering
\scriptsize
\caption{WeatherBench-2 Europe results for various methods.}
\label{tab:wb-eu-dist}
\smallskip
\resizebox{\textwidth}{!}{%
\begin{tabular}{llcccccccc}
\toprule
Target & Method & CvM & QErr & $90\%$ Cov & Mean CRPS & BSS$(>2.5)$ & BSS$(>3.0)$ & BSS$(>3.5)$ & BSS$(>4.0)$ \\
\midrule
Mean & Raw ensemble  & 2.8318 & 0.0344 & 0.929 & 0.05 & 0.892 & 0.904 & 0.895 & 0.886 \\
Mean & BC (Global)   & 2.9386 & 0.0374 & 0.932 & 0.05 & --    & --    & --    & --    \\
Mean & BC (GAM)      & 0.3303 & 0.0164 & 0.867 & 0.04 & --    & --    & --    & --    \\
Mean & CPIT (Raw)    & 0.8739 & 0.0250 & 0.892 & 0.04 & 0.895 & 0.904 & 0.903 & 0.891 \\
Mean & CPIT (Global) & 0.7056 & 0.0225 & 0.879 & 0.04 & 0.895 & 0.908 & 0.911 & 0.890 \\
Mean & CPIT (GAM)    & 0.9078 & 0.0260 & 0.882 & 0.04 & 0.900 & 0.914 & 0.918 & 0.885 \\
\hline
p95 & Raw ensemble  & 4.7708 & 0.0524 & 0.910 & 0.31 & 1.000 & 1.000 & 0.932 & 0.482 \\
p95 & BC (Global)   & 1.6337 & 0.0259 & 0.911 & 0.30 & --    & --    & --    & --    \\
p95 & BC (GAM)      & 0.1232 & 0.0090 & 0.856 & 0.29 & --    & --    & --    & --    \\
p95 & CPIT (Raw)    & 0.7052 & 0.0204 & 0.898 & 0.30 & 1.000 & 1.000 & 0.952 & 0.417 \\
p95 & CPIT (Global) & 0.5433 & 0.0195 & 0.886 & 0.30 & 1.000 & 1.000 & 0.965 & 0.425 \\
p95 & CPIT (GAM)    & 0.4660 & 0.0188 & 0.870 & 0.29 & 1.000 & 1.000 & 0.911 & 0.529 \\
\bottomrule
\end{tabular}}
\end{table}

For the Europe mean target, the raw ensemble is slightly conservative at the
central 90\% level, with coverage $0.929$, but its CvM statistic is $2.8318$.
The global affine correction does not improve this behavior, whereas BC(GAM)
reduces CvM to $0.3303$ and QErr to $0.0164$ at the cost of undercoverage,
$0.867$. The unsmoothed CPIT variants reduce CvM to between $0.7056$ and
$0.9078$ and QErr to between $0.0225$ and $0.0260$. Their central coverages,
$0.879$-$0.892$, are slightly below nominal, but smoothing raises them to
$0.919$-$0.927$ in Table~\ref{tab:wb-eu-interval90}. CPIT(Global) has the
smallest CvM and QErr among the CPIT variants, while CPIT(GAM) has the strongest
Brier skill at the 2.5, 3.0, and 3.5 mm thresholds.

The p95 target shows stronger distributional error in the raw ensemble. Its
90\% coverage is $0.910$, yet its CvM statistic is $4.7708$. BC(GAM) produces
very small PIT and quantile errors, CvM $0.1232$ and QErr $0.0090$, but
undercovers at $0.856$. CPIT reduces CvM to $0.7052$, $0.5433$, and $0.4660$
for the raw, global, and GAM variants, respectively, while their unsmoothed
central coverages are $0.898$, $0.886$, and $0.870$. The smoothed variants
raise these coverages to $0.919$, $0.920$, and $0.904$. CPIT(GAM) has the
smallest p95 CvM and QErr and the largest BSS at the highest displayed
threshold. The lower p95 thresholds are nearly saturated in this split, so
their BSS values should be interpreted cautiously.

\begin{table}[tb]
\centering
\scriptsize
\caption{WeatherBench-2 Europe 90\% interval coverage and average length for various methods. Each entry is ``coverage / length", with length in millimeters. HDR denotes the shortest weighted interval from the calibrated predictive distribution. Smooth indicates using the kernel-smoothed CPIT distribution.}
\label{tab:wb-eu-interval90}
\smallskip
\begin{tabular}{lcc}
\toprule
Method & Mean & p95 \\
\midrule
Raw
& 0.929 / 0.32
& 0.910 / 1.94 \\
BC (GAM)
& 0.867 / 0.23
& 0.856 / 1.56 \\
CPIT (Raw)
& 0.892 / 0.28
& 0.898 / 1.87 \\
CPIT (Raw, smooth)
& 0.925 / 0.31
& 0.919 / 2.01 \\
CPIT (Raw, HDR)
& 0.874 / 0.26
& 0.862 / 1.66 \\
CPIT (Raw, HDR, smooth)
& 0.927 / 0.31
& 0.922 / 1.98 \\
CPIT (Global)
& 0.879 / 0.26
& 0.886 / 1.76 \\
CPIT (Global, smooth)
& 0.927 / 0.31
& 0.920 / 1.99 \\
CPIT (Global, HDR)
& 0.864 / 0.25
& 0.861 / 1.62 \\
CPIT (Global, HDR, smooth)
& 0.928 / 0.31
& 0.922 / 1.96 \\
CPIT (GAM)
& 0.882 / 0.24
& 0.870 / 1.64 \\
CPIT (GAM, smooth)
& 0.919 / 0.26
& 0.904 / 1.76 \\
CPIT (GAM, HDR)
& 0.851 / 0.22
& 0.836 / 1.47 \\
CPIT (GAM, HDR, smooth)
& 0.915 / 0.26
& 0.911 / 1.74 \\
QR (GAM)
& 0.905 / 0.26
& 0.886 / 1.68 \\
SR (GAM)
& 0.904 / 0.25
& 0.885 / 1.62 \\
eCRPS (GAM)
& 0.896 / 0.24
& 0.870 / 1.60 \\
\bottomrule
\end{tabular}
\end{table}

Table~\ref{tab:wb-eu-interval90} shows the same validity-sharpness tradeoff for
both targets. Unsmoothed central and HDR intervals are generally shorter than
the raw intervals but tend to undercover. Kernel smoothing increases coverage
with a moderate increase in length. For example, the p95 CPIT(Global) interval
changes from $0.886/1.76$ without smoothing to $0.920/1.99$ with smoothing,
whereas the p95 CPIT(GAM) interval changes from $0.870/1.64$ to $0.904/1.76$.
Thus the updated results support using the unsmoothed law for compact empirical
summaries and the smoothed law when interval calibration is the primary goal.



\subsection{Taiwan four-cell experiment}

Figure~\ref{fig:wb-eu-map}(b) shows the four WeatherBench-2 grid cells used in
the Taiwan experiment. This setting is more local and meteorologically more
demanding than the Europe regional-average experiment. Each target is a single
$1.5^\circ$ grid cell, so spatial averaging does not smooth out displacement
error, land-sea contrast, or unresolved orographic effects. The four cells
provide coarse proxies for southwestern Taiwan (SW), northwestern Taiwan (NW),
southeastern Taiwan (SE), and northeastern Taiwan (NE).

Table~\ref{tab:wb-tw-cell-clim} reports the observed exceedance frequencies in
the test split. The NE cell is the wettest at the 10 mm threshold, with
$P(Y>10{\mathrm mm})=0.276$. The 80 mm event is rare in all four cells, with
frequencies between $0.003$ and $0.006$, corresponding to only about 4-7
events per cell in $n_{\mathrm test}=1241$ cases. The 80 mm BSS values should
therefore be read as sensitivity diagnostics rather than stable estimates of
operational tail skill.

\begin{table}[tb]
\centering
\scriptsize
\caption{Taiwan four-cell experiment: cell locations and observed exceedance frequencies in the test split.}
\label{tab:wb-tw-cell-clim}
\smallskip
\begin{tabular}{llcccc}
\toprule
Cell & Location & $P(Y>10{\mathrm mm})$ & $P(Y>20{\mathrm mm})$ & $P(Y>40{\mathrm mm})$ & $P(Y>80{\mathrm mm})$ \\
\midrule
SW & $(22.5^\circ{\mathrm N},120^\circ{\mathrm E})$     & 0.128 & 0.061 & 0.022 & 0.005 \\
NW & $(24^\circ{\mathrm N},120^\circ{\mathrm E})$       & 0.115 & 0.053 & 0.015 & 0.003 \\
SE & $(22.5^\circ{\mathrm N},121.5^\circ{\mathrm E})$   & 0.147 & 0.058 & 0.019 & 0.006 \\
NE & $(24^\circ{\mathrm N},121.5^\circ{\mathrm E})$     & 0.276 & 0.091 & 0.023 & 0.005 \\
\bottomrule
\end{tabular}
\end{table}

Table~\ref{tab:wb-tw-fourcell-dist} reports distributional calibration and
probabilistic forecasting scores. The raw ensemble is substantially
underdispersed in SW, NW, and NE, with central 90\% coverages $0.763$, $0.737$,
and $0.734$. Their CvM statistics are $12.6724$, $7.4014$, and $5.3007$,
respectively. The SE cell is less severely miscalibrated but still undercovers
at $0.820$.

\begin{table}[tb]
\centering
\scriptsize
\caption{WeatherBench-2 Taiwan four-cell results for various methods.}
\label{tab:wb-tw-fourcell-dist}
\medskip
\resizebox{\textwidth}{!}{%
\begin{tabular}{llcccccccc}
\toprule
Cell & Method & CvM & QErr & $90\%$ Cov & Mean CRPS & BSS$(>10)$ & BSS$(>20)$ & BSS$(>40)$ & BSS$(>80)$ \\
\midrule
SW & Raw ensemble  & 12.6724 & 0.0976 & 0.763 & 0.95 & 0.707 & 0.710 & 0.555 & 0.017 \\
SW & BC (Global)   &  1.0747 & 0.0334 & 0.800 & 0.94 & --    & --    & --    & --    \\
SW & BC (GAM)      &  0.2163 & 0.0084 & 0.871 & 0.94 & --    & --    & --    & --    \\
SW & CPIT (Raw)    &  0.0900 & 0.0114 & 0.862 & 0.95 & 0.705 & 0.703 & 0.527 & 0.124 \\
SW & CPIT (Global) &  0.1482 & 0.0101 & 0.857 & 0.94 & 0.709 & 0.702 & 0.536 & 0.106 \\
SW & CPIT (GAM)    &  0.1679 & 0.0137 & 0.870 & 0.94 & 0.706 & 0.720 & 0.540 & 0.034 \\
\hline
NW & Raw ensemble  &  7.4014 & 0.0761 & 0.737 & 0.85 & 0.695 & 0.715 & 0.601 & 0.715 \\
NW & BC (Global)   & 10.2603 & 0.0861 & 0.712 & 0.85 & --    & --    & --    & --    \\
NW & BC (GAM)      &  0.3066 & 0.0159 & 0.876 & 0.82 & --    & --    & --    & --    \\
NW & CPIT (Raw)    &  0.7067 & 0.0249 & 0.877 & 0.84 & 0.707 & 0.706 & 0.601 & 0.694 \\
NW & CPIT (Global) &  0.6024 & 0.0244 & 0.853 & 0.85 & 0.709 & 0.704 & 0.590 & 0.680 \\
NW & CPIT (GAM)    &  0.2306 & 0.0130 & 0.890 & 0.82 & 0.713 & 0.727 & 0.605 & 0.756 \\
\hline
SE & Raw ensemble  &  0.4966 & 0.0253 & 0.820 & 1.14 & 0.687 & 0.583 & 0.533 & -0.370 \\
SE & BC (Global)   &  0.2791 & 0.0216 & 0.823 & 1.13 & --    & --    & --    & --    \\
SE & BC (GAM)      &  0.2145 & 0.0102 & 0.879 & 1.10 & --    & --    & --    & --    \\
SE & CPIT (Raw)    &  0.1770 & 0.0128 & 0.873 & 1.13 & 0.685 & 0.588 & 0.538 & -0.390 \\
SE & CPIT (Global) &  0.1749 & 0.0111 & 0.874 & 1.14 & 0.684 & 0.588 & 0.543 & -0.561 \\
SE & CPIT (GAM)    &  0.1139 & 0.0092 & 0.889 & 1.10 & 0.697 & 0.617 & 0.543 & -0.915 \\
\hline
NE & Raw ensemble  &  5.3007 & 0.0654 & 0.734 & 1.75 & 0.513 & 0.491 & 0.498 & 0.620 \\
NE & BC (Global)   &  4.5055 & 0.0639 & 0.731 & 1.78 & --    & --    & --    & --    \\
NE & BC (GAM)      &  0.1711 & 0.0179 & 0.839 & 1.66 & --    & --    & --    & --    \\
NE & CPIT (Raw)    &  0.3065 & 0.0214 & 0.827 & 1.75 & 0.528 & 0.482 & 0.467 & 0.623 \\
NE & CPIT (Global) &  0.2593 & 0.0181 & 0.856 & 1.75 & 0.537 & 0.478 & 0.414 & 0.612 \\
NE & CPIT (GAM)    &  0.1044 & 0.0120 & 0.895 & 1.66 & 0.580 & 0.492 & 0.503 & 0.324 \\
\bottomrule
\end{tabular}}
\end{table}

The global affine correction is unreliable in this local setting. It improves
SW, SE, and NE to varying degrees but worsens NW, where CvM increases from
$7.4014$ to $10.2603$. By contrast, BC(GAM) has average CvM $0.2271$, compared
with $6.4678$ for the raw ensemble, and average QErr $0.0131$, compared with
$0.0661$. Its average 90\% coverage is $0.8663$, so a flexible location-scale
correction alone still does not fully calibrate predictive uncertainty.

Averaged over the four cells, the CvM statistics for CPIT(Raw), CPIT(Global),
and CPIT(GAM) are $0.3201$, $0.2962$, and $0.1542$, respectively. CPIT(GAM)
therefore reduces average CvM by about 98\% relative to the raw ensemble. Its
average QErr is $0.0120$. The average unsmoothed central coverage improves from
$0.7635$ for the raw ensemble to $0.8860$ for CPIT(GAM), with NW and NE
improving to $0.890$ and $0.895$. The smoothed CPIT(GAM) intervals in
Table~\ref{tab:wb-tw-fourcell-interval90} raise average coverage further to
$0.9275$.

For the more frequent 10 and 20 mm events, CPIT(GAM) improves average BSS from
$0.6505$ to $0.6740$ and from $0.6248$ to $0.6390$, respectively. At 40 mm the
average BSS is essentially unchanged. At 80 mm, a few false alarms or misses
produce large cell-to-cell changes, including strongly negative values in SE,
so this threshold is not used to rank methods.

\begin{table}[tb]
\centering
\scriptsize
\caption{WeatherBench-2 Taiwan 90\% interval coverage and average length for the log-millimeter analysis. Each entry is coverage / length, with length in millimeters. HDR denotes the shortest weighted interval from the calibrated predictive distribution. Smooth intervals use the kernel-smoothed CPIT distribution.}
\label{tab:wb-tw-fourcell-interval90}
\smallskip
\begin{tabular}{lcccc}
\toprule
Method & SW & NW & SE & NE \\
\midrule
Raw
& 0.763 / 4.23
& 0.737 / 3.78
& 0.820 / 5.57
& 0.734 / 6.65 \\
BC (GAM)
& 0.871 / 5.28
& 0.876 / 4.57
& 0.879 / 6.34
& 0.839 / 8.57 \\
CPIT (Raw)
& 0.862 / 4.88
& 0.877 / 5.87
& 0.873 / 6.88
& 0.827 / 9.64 \\
CPIT (Raw, smooth)
& 0.919 / 5.48
& 0.922 / 5.83
& 0.898 / 7.36
& 0.857 / 10.18 \\
CPIT (Raw, HDR)
& 0.814 / 3.96
& 0.828 / 4.35
& 0.834 / 5.44
& 0.794 / 8.23 \\
CPIT (Raw, HDR, smooth)
& 0.910 / 5.09
& 0.912 / 5.25
& 0.892 / 6.82
& 0.865 / 9.76 \\
CPIT (Global)
& 0.857 / 4.97
& 0.853 / 5.82
& 0.874 / 6.93
& 0.856 / 11.02 \\
CPIT (Global, smooth)
& 0.918 / 5.68
& 0.915 / 5.83
& 0.898 / 7.39
& 0.874 / 11.07 \\
CPIT (Global, HDR)
& 0.836 / 4.26
& 0.802 / 4.32
& 0.837 / 5.71
& 0.810 / 8.40 \\
CPIT (Global, HDR, smooth)
& 0.908 / 5.22
& 0.907 / 5.25
& 0.889 / 6.88
& 0.873 / 10.25 \\
CPIT (GAM)
& 0.870 / 5.43
& 0.890 / 4.84
& 0.889 / 6.68
& 0.895 / 10.46 \\
CPIT (GAM, smooth)
& 0.923 / 6.13
& 0.941 / 5.48
& 0.929 / 7.48
& 0.917 / 10.64 \\
CPIT (GAM, HDR)
& 0.844 / 4.51
& 0.857 / 4.06
& 0.863 / 5.89
& 0.844 / 8.26 \\
CPIT (GAM, HDR, smooth)
& 0.913 / 5.76
& 0.930 / 5.17
& 0.921 / 7.11
& 0.903 / 10.01 \\
QR (Raw)
& 0.891 / 4.65
& 0.915 / 4.23
& 0.892 / 6.15
& 0.882 / 9.08 \\
QR (GAM)
& 0.882 / 5.32
& 0.922 / 4.68
& 0.911 / 6.57
& 0.895 / 9.29 \\
SR (GAM)
& 0.887 / 5.25
& 0.929 / 4.84
& 0.910 / 6.67
& 0.887 / 9.07 \\
eCRPS (GAM)
& 0.915 / 5.82
& 0.926 / 6.17
& 0.915 / 6.72
& 0.885 / 9.39 \\
\bottomrule
\end{tabular}
\end{table}

The raw intervals have average coverage $0.7635$ and average length $5.06$ mm.
For CPIT(GAM), the unsmoothed central intervals have average coverage $0.8860$
and length $6.85$ mm, and smoothing increases these to $0.9275$ and $7.43$ mm.
The unsmoothed HDR intervals are shorter, with average length $5.68$ mm, but
undercover at $0.8520$. Smoothed HDR intervals provide a more balanced
compromise, with average coverage $0.9168$ and length $7.01$ mm. These results
show that smoothing is particularly useful when the finite 50-member ensemble
limits the resolution of high-coverage intervals.

The QR, SR, and eCRPS baselines perform well under the fixed 90\% interval
criterion because each is calibrated specifically at that nominal level.
Their outputs, however, are level-specific intervals rather than reusable
predictive distributions. CPIT instead produces a single calibrated CDF from
which quantiles, intervals, and exceedance probabilities at arbitrary
thresholds can be derived coherently. For example, letting
$\widehat\pi_i(c)=1-\widehat F_i(c)$ denote the predicted probability that
precipitation exceeds $c$ in forecast case $i$, the resulting probabilities
are automatically monotone in the threshold:
\[
\widehat\pi_i(10\,\mathrm{mm})
\geq
\widehat\pi_i(20\,\mathrm{mm})
\geq
\widehat\pi_i(40\,\mathrm{mm})
\geq
\widehat\pi_i(80\,\mathrm{mm}).
\]

\subsection{Diagnostic plots}

Figures~\ref{fig:wb-eu-chat} and \ref{fig:wb-tw-chat} show the fitted calibration maps for selected test cases. The Europe GAM-adjusted profiles are close to the diagonal, while the Taiwan profiles show that the remaining departures after global correction vary by cell and are substantially reduced by the GAM correction. Figures~\ref{fig:wb-eu-pit} and \ref{fig:wb-tw-pit} show the updated randomized PIT histograms. These plots agree with Tables~\ref{tab:wb-eu-dist} and \ref{tab:wb-tw-fourcell-dist}. Specifically, bias correction removes much of the location-scale error, and CPIT flattens the remaining rank distortions.

\begin{figure}[tb]
\centering
\maybeincludegraphics[width=0.8\textwidth]{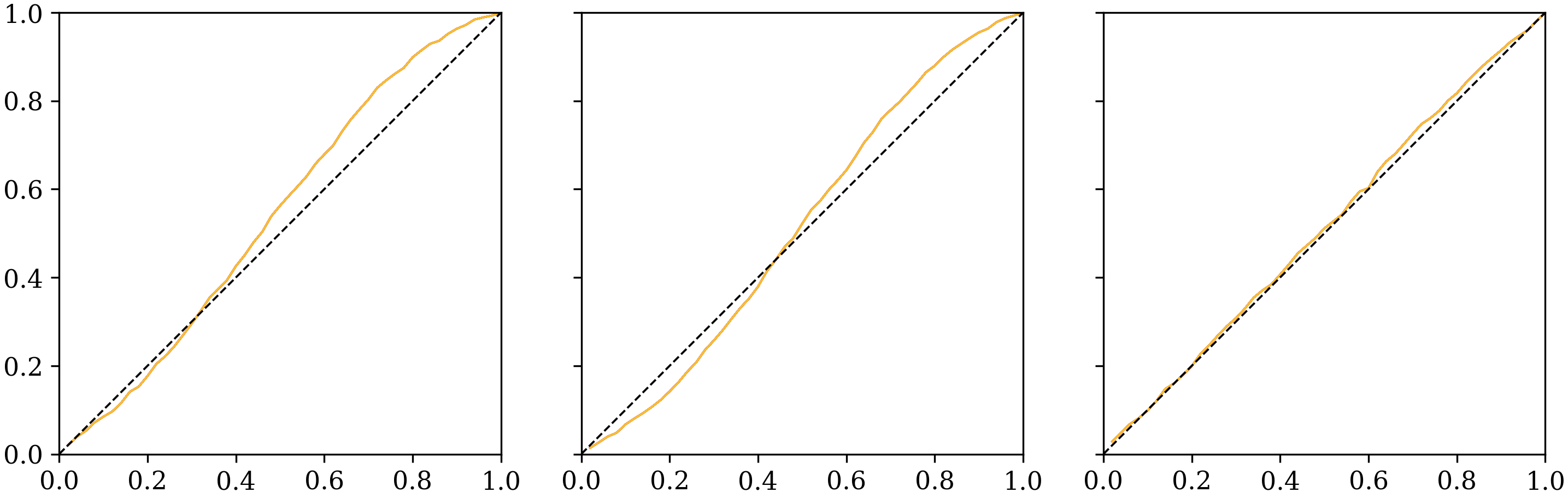}
\vspace{-0.2cm}
\caption{Europe empirical CPIT calibration profiles for the seed-123
log-millimeter mean-target analysis. Left to right: CPIT(Raw), CPIT(Global),
and CPIT(GAM). The dashed line is the uniform reference \(\widehat C(t)=t\).}
\label{fig:wb-eu-chat}
\end{figure}

\begin{figure}[tb]
\centering
\maybeincludegraphics[width=0.8\textwidth]{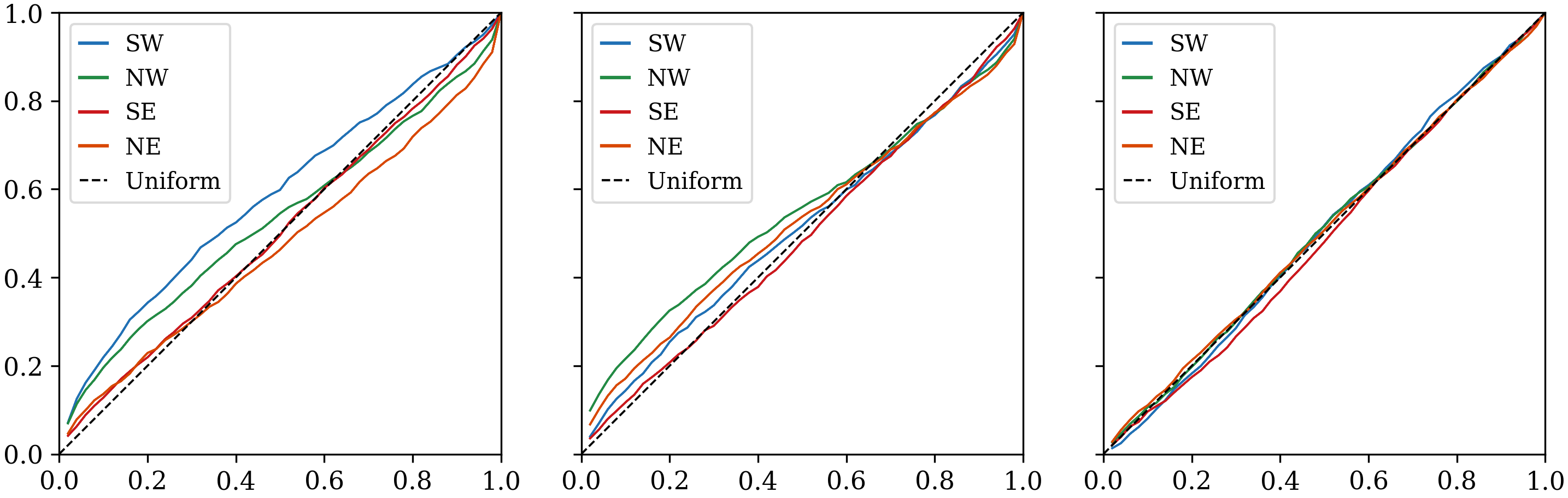}
\vspace{-0.2cm}
\caption{Taiwan empirical CPIT calibration profiles for the seed-123
log-millimeter mean-target analysis. Left to right: CPIT(Raw), CPIT(Global),
and CPIT(GAM). Colours distinguish the SW, NW, SE, and NE cells.}
\label{fig:wb-tw-chat}
\end{figure}

\begin{figure}[tb]
\centering
~\hspace{1cm}\footnotesize{Raw}\hspace{1.7cm}\footnotesize{BC}(Global)\hspace{1.3cm}\footnotesize{BC(GAM)}\hspace{1cm}\footnotesize{CPIT(Global)}\hspace{0.9cm}\footnotesize{CPIT(GAM)}\\
\maybeincludegraphics[width=0.8\textwidth]{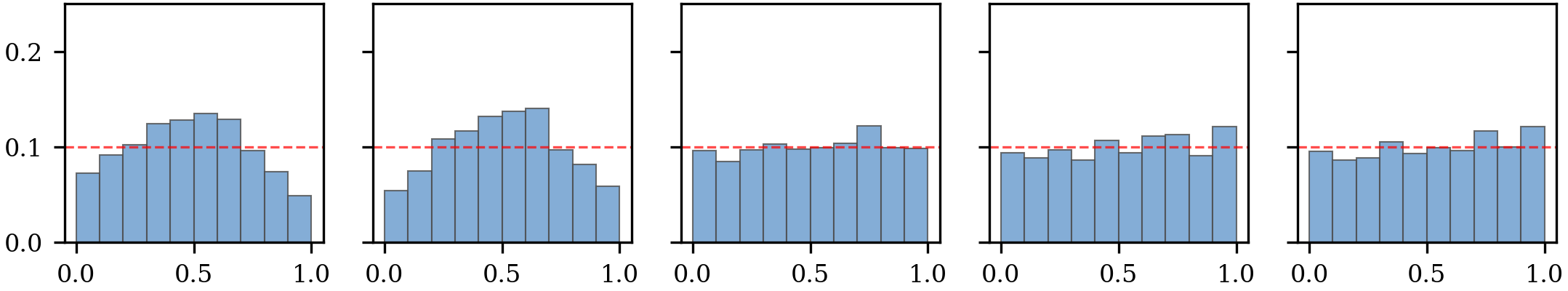}
\vspace{0.2em}\\
\maybeincludegraphics[width=0.8\textwidth]{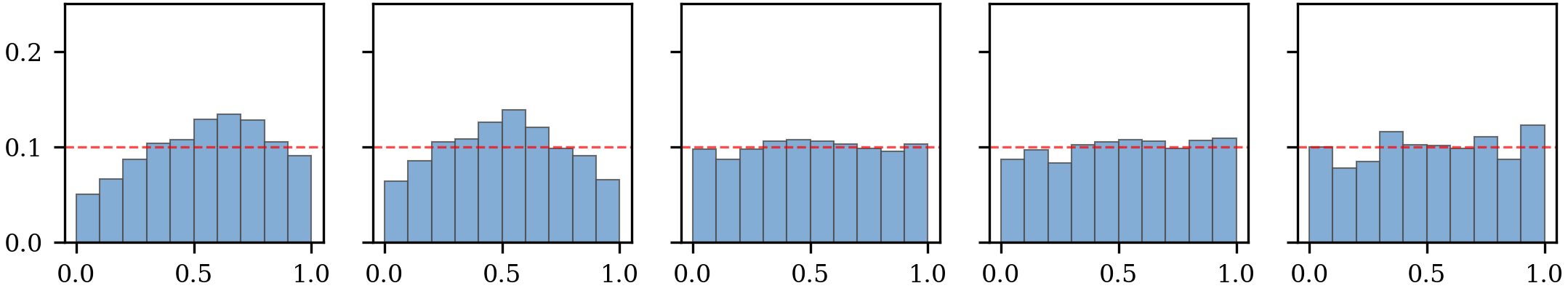}
\vspace{-0.3cm}\\
\caption{Europe randomized PIT histograms for an illustrative split. Top:
area-weighted mean target. Bottom: regional p95 target. Each row compares the
raw ensemble, bias-corrected distributions, and CPIT-calibrated distributions.}
\label{fig:wb-eu-pit}
\end{figure}

\begin{figure}[tb]
\centering
\begin{tabular}{cc}
\footnotesize SW & \footnotesize NW \\
\maybeincludegraphics[width=0.47\textwidth]{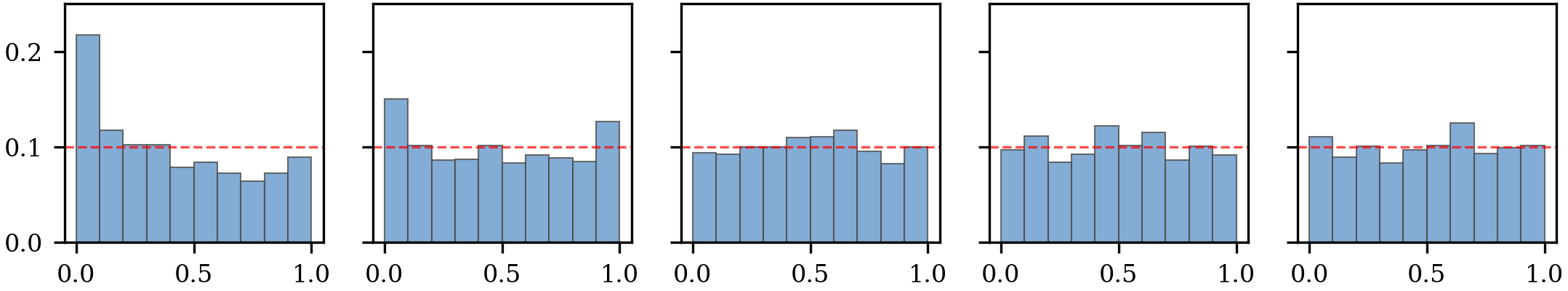} &
\maybeincludegraphics[width=0.47\textwidth]{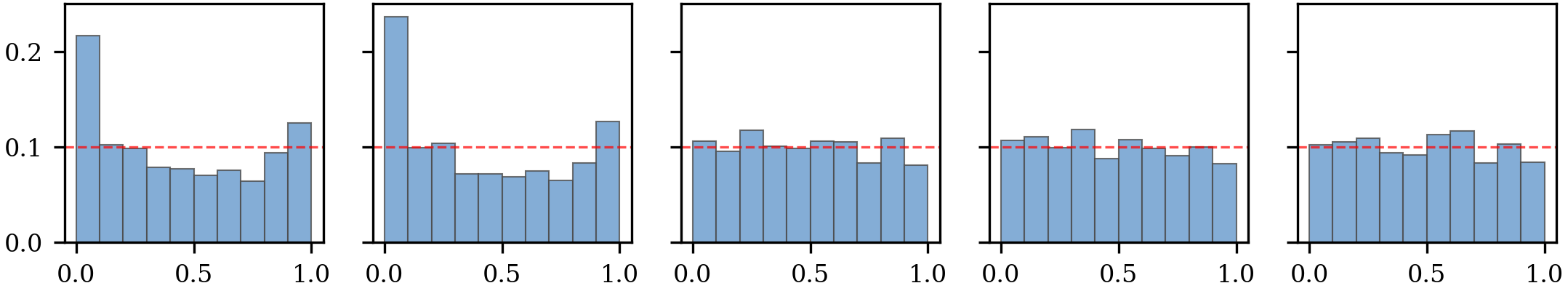} \\[0.3em]
\footnotesize SE & \footnotesize NE \\
\maybeincludegraphics[width=0.47\textwidth]{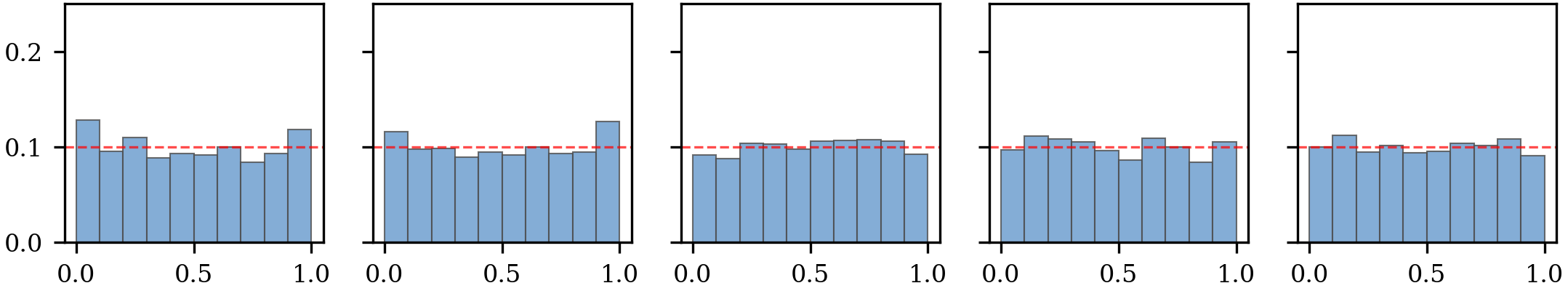} &
\maybeincludegraphics[width=0.47\textwidth]{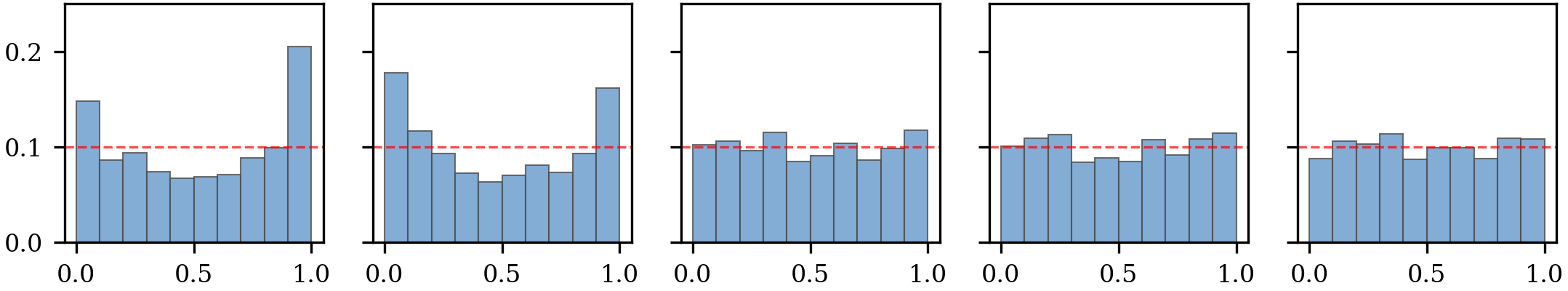}
\end{tabular}
\vspace{-0.2cm}
\caption{Taiwan randomized PIT histograms for the seed-123 log-millimeter
analysis. The panels correspond to the SW, NW, SE, and NE cells. Within each
panel, the columns are Raw, BC(Global), BC(GAM), CPIT(Global), and CPIT(GAM).}
\label{fig:wb-tw-pit}
\end{figure}


\section{Discussion}
\label{sec:discussion}

CPIT is a distributional calibration layer for sample-only predictors, not a fixed-level conformal interval method. Its primary output is a calibrated predictive CDF, from which one can compute threshold probabilities, quantiles at arbitrary levels, HDR intervals, tail probabilities, tail expectations, calibrated resamples, and decision-relevant risk summaries. This distinction is important in applications such as precipitation forecasting, where the same predictive law may be queried at many thresholds and risk levels, not just at one nominal coverage level. The finite-sample guarantee for CPIT is therefore stated in terms of calibrated PIT ranks, reflecting its goal of distributional reliability. When exact marginal coverage is also required at a specified level, the PIT-centrality conformal wrapper in Theorem~\ref{thm:pit_central_interval} can be applied with a separate interval-calibration split to obtain a standard split-conformal prediction interval.

CPIT is also computationally simple. The calibrated predictive law is a weighted empirical distribution supported on the original generator draws, with optional smoothing for interpolation and more stable tail summaries. This makes the method easy to apply to black-box ensembles or simulation-based predictors, since it requires only predictive samples and a held-out calibration set. It also makes the graphical diagnostics transparent through PIT histograms, calibrated rank CDFs, interval-length comparisons, and threshold-event score curves, all of which evaluate different projections of the same calibrated distribution. This combination of reusable distributional output and graphical diagnostics is a key practical advantage of the method.

The number of generator draws $m$ determines the resolution of the empirical CDF and the amount of available tail support. Larger $m$ generally improves empirical quantiles, HDR intervals, calibrated resampling, and tail-risk summaries by providing a richer sample cloud. When $m$ is small, or when the main targets are high-coverage intervals or rare-event probabilities, the smoothed weighted CDF can be more stable than the raw weighted empirical CDF. Smoothing should therefore be viewed as a numerical regularization step, not as a replacement for adequate ensemble diversity.

At the same time, CPIT cannot create information that is absent from the underlying sample cloud. If the generator misses important modes, has too few tail samples, or produces samples on a coarse finite support, the calibrated CDF can reweight and smooth those samples but cannot fully reconstruct the missing conditional distribution. This limitation is most visible for empirical equal-tail and HDR intervals when the number of generator draws $m$ is small, or when very high coverage levels require extrapolation beyond the available sample support. For this reason, interval coverage, interval length, PIT calibration, CRPS, and threshold-event scores should be reported together. Good performance on one summary need not imply good distributional calibration.

The present theory gives marginal calibration under exchangeability. This is
the natural distribution-free target for a general post-processing layer, but
it does not imply exact conditional calibration at each covariate value.
More localized versions of CPIT could use covariate-dependent calibration maps, Mondrian partitions \citep{pmlr-v152-bostrom21a}, or weighted calibration samples.
Naive kernel- or nearest-neighbor-weighted PIT recalibration need not retain exact finite-sample marginal validity.
Carefully constructed localized conformal procedures, however, can preserve finite-sample marginal coverage while improving local adaptivity \citep{guan2023localized}.
Exact distribution-free conditional coverage at every covariate value remains impossible without additional restrictions \citep{barber2021limits}.

Another important direction is calibration under distribution shift. In many simulation-to-real and forecasting problems, the conditional distribution of the response given the predictive sample may be relatively stable, while the marginal distribution of covariates changes between calibration and deployment. Weighted conformal methods under covariate shift provide a natural route for adapting CPIT in this setting \citep{tibshirani2019covshift}. The same idea could be applied at the PIT level by weighting calibration cases based on their relevance to the deployment covariate distribution.

Overall, CPIT provides a computationally lightweight way to convert sample-only predictive output into a calibrated predictive distribution. Its main advantage is coherence. All reported quantities are derived from one CDF, so quantiles, intervals, exceedance probabilities, and resamples are mutually consistent. This makes CPIT especially useful as a post-processing step for modern ensemble, Monte Carlo, and generative prediction systems, where the predictor naturally returns samples, but downstream statistical analysis requires calibrated distributional summaries.



\section*{Data Availability Statement}

The WeatherBench 2 data used in this study, including the ECMWF IFS-ENS
forecasts and the corresponding ERA5 verification fields, are publicly
available through the WeatherBench 2 data archive:
\url{https://weatherbench2.readthedocs.io/en/latest/data-guide.html}.

\section*{Disclosure Statement}

No potential conflict of interest was reported by the authors.






\appendix
\section{Proofs for Section~\ref{sec:theory}}
\label{app:proofs}

\begin{proof}[Proof of Theorem~\ref{thm:bccpit}]
Let
\[
\mathcal J=\mathcal I_{\mathrm cal}\cup\{n+1\},
\qquad |\mathcal J|=N+1.
\]
Conditional on $\mathcal D_{\mathrm tr}$ and $\mathcal D_{\mathrm bias}$, the fitted
generator and bias-correction rule are fixed. By assumption,
$\{(\mathcal O_i,V_i):i\in\mathcal J\}$ is exchangeable. The randomized PIT
$u_i$ is the same measurable function of $(\mathcal O_i,V_i)$ for every
$i\in\mathcal J$, so $\{u_i:i\in\mathcal J\}$ is exchangeable. Moreover, each
$u_i$ lies in $(0,1)$ almost surely.

For $r\in\mathcal J$, define its upper rank among the $N+1$ PIT values by
\[
R_r^+
=
\sum_{k\in\mathcal J}\mathbf 1\{u_k\le u_r\}.
\]
Because $u_{n+1}>0$ almost surely, \eqref{eq:calibrator} gives
\[
\widehat C(u_{n+1})
=
\frac{1+\sum_{i\in\mathcal I_{\mathrm cal}}
\mathbf 1\{u_i\le u_{n+1}\}}{N+1}
=
\frac{R_{n+1}^+}{N+1}.
\]
For any $k\in\{1,\ldots,N+1\}$, at most $k$ indices can have upper rank no
larger than $k$. Exchangeability therefore implies
\begin{align*}
\mathbb P\{R_{n+1}^+\le k
\mid\mathcal D_{\mathrm tr},\mathcal D_{\mathrm bias}\}
&=
\frac{1}{N+1}\sum_{r\in\mathcal J}
\mathbb P\{R_r^+\le k
\mid\mathcal D_{\mathrm tr},\mathcal D_{\mathrm bias}\}\\
&=
\frac{1}{N+1}
\mathbb E\!\left[
\sum_{r\in\mathcal J}\mathbf 1\{R_r^+\le k\}
\,\middle|\,
\mathcal D_{\mathrm tr},\mathcal D_{\mathrm bias}
\right]\le \frac{k}{N+1}.
\end{align*}
For $t\in[0,1]$, take $k=\lfloor(N+1)t\rfloor$. Then
\[
\mathbb P\!\left\{
\widehat C(u_{n+1})\le t
\,\middle|\,
\mathcal D_{\mathrm tr},\mathcal D_{\mathrm bias}
\right\}
\le
\frac{k}{N+1}
\le t,
\]
which proves \eqref{eq:finite_pit_calibration}.

Conditional on all forecast-response objects, each $u_i$ is a strictly
increasing affine function of the independent continuous variable $V_i$.
Hence the $N+1$ PIT values are almost surely distinct. Their ranks are therefore
a uniformly random permutation of $\{1,\ldots,N+1\}$, so $R_{n+1}^+$ is uniform
on this set. It follows that $\widehat C(u_{n+1})$ is uniform on
$\{1/(N+1),\ldots,1\}$.
\end{proof}

\begin{proof}[Proof of Theorem~\ref{thm:pit_central_interval}]
Conditional on $\mathcal D_{\mathrm CPIT}$, the score rule
$(x,y,\widehat y^{(1)},\ldots,\widehat y^{(m)})\mapsto
\big|2\widetilde F_x(y)-1\big|$ is fixed and is applied identically to every case.
Thus
\[
\{S_i^{\mathrm pit}:i\in\mathcal I_{\mathrm int}\cup\{n+1\}\}
\]
is exchangeable. By \eqref{eq:pit-centrality-set},
\[
Y_{n+1}\in\mathcal C_\alpha^{\mathrm pit}(X_{n+1})
\quad\Longleftrightarrow\quad
S_{n+1}^{\mathrm pit}\le q_\alpha^{\mathrm pit}.
\]
The standard split-conformal rank argument gives
\[
\mathbb P\{S_{n+1}^{\mathrm pit}\le q_\alpha^{\mathrm pit}
\mid\mathcal D_{\mathrm CPIT}\}
\ge
\frac{k_\alpha}{N_{\mathrm int}+1}
\ge 1-\alpha,
\]
with the convention that the event is certain when $k_\alpha>N_{\mathrm int}$ and
$q_\alpha^{\mathrm pit}=+\infty$. This proves \eqref{eq:pit_central_coverage}.

If the $N_{\mathrm int}+1$ scores are almost surely distinct, the rank of the test
score is uniform on $\{1,\ldots,N_{\mathrm int}+1\}$. The coverage probability is
then exactly $k_\alpha/(N_{\mathrm int}+1)$, which is at most
$1-\alpha+1/(N_{\mathrm int}+1)$.

Finally, $\alpha_1<\alpha_2$ implies
$k_{\alpha_1}\ge k_{\alpha_2}$ and therefore
$q_{\alpha_1}^{\mathrm pit}\ge q_{\alpha_2}^{\mathrm pit}$. The corresponding score
sublevel sets are nested:
$\mathcal C_{\alpha_1}^{\mathrm pit}(x)\supseteq
\mathcal C_{\alpha_2}^{\mathrm pit}(x)$ for every $x$.
\end{proof}

\begin{proof}[Proof of Proposition~\ref{prop:smoothed_regular}]
Fix $x$ and write
\[
z_j=\widehat Y_{\mathrm{adj},(j)}(x),
\qquad j=1,\ldots,m.
\]
The weights $w_1,\ldots,w_m$ are nonnegative and sum to one because they are
increments of the nondecreasing finite-support calibration map $\widehat C_m$.
Consequently,
\[
\widetilde F_x^\tau(y)
=
\sum_{j=1}^m w_j
\Phi\!\left(\frac{y-z_j}{\tau_x}\right)
\]
is a convex combination of Gaussian CDFs and hence is a proper CDF.
Differentiating term by term gives \eqref{eq:smoothed_density_theory}. At least
one weight is positive, and every Gaussian density is strictly positive on
$\mathbb R$, so $\widetilde f_x^\tau(y)>0$ for every $y$. Thus the distribution
has full support and its CDF is strictly increasing.

Let $J$ be a discrete random index with $\mathbb P(J=j)=w_j$, and let
$Z\sim N(0,1)$ be independent of $J$. Define
\[
\widetilde Y_x=z_J,
\qquad
\widetilde Y_x^\tau=z_J+\tau_x Z.
\]
Then $\widetilde Y_x\sim\widetilde P_x$ and
$\widetilde Y_x^\tau\sim\widetilde P_x^\tau$, so this construction is a coupling
of the two distributions. Therefore
\[
W_1(\widetilde P_x^\tau,\widetilde P_x)
\le
\mathbb E|\widetilde Y_x^\tau-\widetilde Y_x|
=
\tau_x\mathbb E|Z|
=
\tau_x\sqrt{2/\pi}.
\]
If $h$ is $L$-Lipschitz, the same coupling gives
\[
\left|
\mathbb E_{\widetilde P_x^\tau}\{h(Y)\}
-
\mathbb E_{\widetilde P_x}\{h(Y)\}
\right|
\le
\mathbb E\left|h(\widetilde Y_x^\tau)-h(\widetilde Y_x)\right|
\le
L\tau_x\sqrt{2/\pi}.
\]
\end{proof}

\begin{proof}[Proof of Theorem~\ref{thm:dkw_stability}]
Let
\[
\mathbb C_N(u)
=
\frac{1}{N}\sum_{i\in\mathcal I_{\mathrm cal}}
\mathbf 1\{u_i\le u\}
\]
be the ordinary empirical CDF of the calibration PIT values. Conditional on
$\mathcal D_{\mathrm tr}$ and $\mathcal D_{\mathrm bias}$, these values are independent
with common CDF $C^\star$. The Dvoretzky-Kiefer-Wolfowitz (DKW) inequality
\citep{dvoretzky1956asymptotic,massart1990tight} gives
\[
\mathbb P\!\left\{
\sup_{u\in[0,1]}|\mathbb C_N(u)-C^\star(u)|>\varepsilon
\,\middle|\,
\mathcal D_{\mathrm tr},\mathcal D_{\mathrm bias}
\right\}
\le 2\exp(-2N\varepsilon^2).
\]
For $u>0$,
\[
\widehat C(u)=\frac{1+N\mathbb C_N(u)}{N+1},
\]
and therefore
\[
|\widehat C(u)-C^\star(u)|
\le
|\mathbb C_N(u)-C^\star(u)|+\frac{1}{N+1}.
\]
At $u=0$, both CDFs are zero. Hence, on the event
\[
\mathcal E_\varepsilon
=
\left\{
\sup_{u\in[0,1]}|\mathbb C_N(u)-C^\star(u)|\le\varepsilon
\right\},
\]
we have
\begin{equation}
\Delta_N
:=
\sup_{u\in[0,1]}|\widehat C(u)-C^\star(u)|
\le
\delta_{N,\varepsilon}.
\label{eq:proof_calibration_sup}
\end{equation}

On $\mathcal E_\varepsilon$, the assumption
$\delta_{N,\varepsilon}<C^\star\{m/(m+1)\}$ implies
\[
\widehat C\{m/(m+1)\}
\ge
C^\star\{m/(m+1)\}-\Delta_N
>0.
\]
For $j=0,\ldots,m$, define the cumulative weight error
\begin{align*}
B_j
&=
\sum_{\ell=1}^j(w_\ell-w_\ell^\star)=
\frac{\widehat C\{j/(m+1)\}}
     {\widehat C\{m/(m+1)\}}
-
\frac{C^\star\{j/(m+1)\}}
     {C^\star\{m/(m+1)\}},
\end{align*}
where the empty sum is zero. Thus $B_0=B_m=0$. For every $j=0,\ldots,m$,
\begin{align*}
|B_j|
&\le
\frac{|\widehat C\{j/(m+1)\}-C^\star\{j/(m+1)\}|}
     {\widehat C\{m/(m+1)\}}\\
&\quad+
C^\star\{j/(m+1)\}
\left|
\frac{1}{\widehat C\{m/(m+1)\}}
-
\frac{1}{C^\star\{m/(m+1)\}}
\right|\\
&\le
\frac{2\Delta_N}{\widehat C\{m/(m+1)\}}\\
&\le
\frac{2\Delta_N}
{C^\star\{m/(m+1)\}-\Delta_N}.
\end{align*}
The second inequality uses the monotonicity of $C^\star$, which gives
$C^\star\{j/(m+1)\}\le C^\star\{m/(m+1)\}$.

Fix $x$ and $y$, and write
\[
K_j(x,y)
=
\Phi\!\left(
\frac{y-\widehat Y_{\mathrm{adj},(j)}(x)}{\tau_x}
\right).
\]
Because the adjusted samples are ordered,
$K_1(x,y)\ge\cdots\ge K_m(x,y)$. Also,
$w_j-w_j^\star=B_j-B_{j-1}$. Summation by parts therefore yields
\begin{align*}
\widetilde F_x^\tau(y)-F_{x,C^\star}^\tau(y)
&=
\sum_{j=1}^m(B_j-B_{j-1})K_j(x,y)=
\sum_{j=1}^{m-1}B_j\{K_j(x,y)-K_{j+1}(x,y)\}.
\end{align*}
Since the differences $K_j-K_{j+1}$ are nonnegative and telescope to at most
one,
\[
\left|\widetilde F_x^\tau(y)-F_{x,C^\star}^\tau(y)\right|
\le
\max_{0\le j\le m}|B_j|
\le
\frac{2\Delta_N}
{C^\star\{m/(m+1)\}-\Delta_N}.
\]
On $\mathcal E_\varepsilon$, \eqref{eq:proof_calibration_sup} and the monotonicity
of $z\mapsto 2z/[C^\star\{m/(m+1)\}-z]$ give the bound in
\eqref{eq:dkw_stability}, uniformly over $x\in\mathcal X_0$ and
$y\in\mathbb R$. The DKW inequality bounds the probability of
$\mathcal E_\varepsilon^c$ by $2\exp(-2N\varepsilon^2)$.

For the unsmoothed CDFs, replace $K_j(x,y)$ by
$\mathbf 1\{\widehat Y_{\mathrm{adj},(j)}(x)\le y\}$. This sequence is again
nonincreasing in $j$, so the same summation-by-parts argument applies.
\end{proof}

\end{document}